\newcommand{\N}{\ensuremath{\mathbb{N}}}
\newcommand{\R}{\ensuremath{\mathbb{R}}}
\newcommand{\Q}{\ensuremath{\mathbb{Q}}}
\newcommand{\Z}{\ensuremath{\mathbb{Z}}}
\newcommand{\E}{\ensuremath{\mathbb{E}}}
\renewcommand{\P}{\ensuremath{\mathbb{P}}}
\newcommand{\ind}[1]{\ensuremath{\mathbbm{1}_{\left\{#1\right\}}}}
\newcommand{\cal}[1]{\ensuremath{\mathcal{#1}}}
\def\ld{\lambda_2}
\def\lr{\lambda_3}
\def\sd{\sigma_2}
\def\sr{\sigma_3}
\def\var{\mathrm{var}}
\newcommand{\diff}{\mathop{}\mathopen{}\mathrm{d}}
\newtheorem{definition}{Definition}
\newtheorem{proposition}{Proposition}
\newtheorem*{notation}{Notations}
\newtheorem{theorem}{Theorem}
\title[Stochastic Gene Expression in Cells]{Stochastic Gene Expression in Cells: A Point Process Approach}
\date{\today}
\author[V. Fromion]{Vincent Fromion}
\email{Vincent.Fromion@jouy.inra.fr}
\address[V. Fromion, E. Leoncini]{MIG Mathematic, Computing Science and Genome, INRA,
Domaine de Vilvert, 78350 Jouy-en-Josas, France}
\author[E. Leoncini]{Emanuele Leoncini}
\email{Emanuele.Leoncini@inria.fr}
\author[Ph. Robert]{Philippe Robert}
\email{Philippe.Robert@inria.fr}
\urladdr{\href{http://www-rocq.inria.fr/~robert}{http://www-rocq.inria.fr/\~{}robert}}
\address[E. Leoncini,Ph. Robert]{INRIA Paris---Rocquencourt, Domaine de Voluceau, 78153 Le Chesnay, France}
\keywords{}
\date{\today}
\begin{document}

\begin{abstract}
This paper investigates the stochastic fluctuations of the number of copies of a given protein in a cell.  This problem has already been addressed in the past and closed-form expressions of the mean and variance have been obtained for a simplified stochastic model of the  gene expression. These results have been obtained under the assumption that the duration of all the protein production steps are exponentially distributed. In such a case, a Markovian approach (via Fokker-Planck equations) is used to derive analytic formulas of the mean and the variance of the number of proteins at equilibrium.  This assumption is however  not totally satisfactory from a modeling point of view since the distribution of the duration of some steps is more likely to be Gaussian, if not almost deterministic. In such a setting,  Markovian methods  can no longer be used. A finer characterization of the fluctuations of the number of proteins is therefore of primary interest to understand the general economy of the cell. In this paper, we propose a new approach, based on marked Poisson point processes, which allows to remove the exponential assumption. This is applied in the framework of the classical three stages models of the literature:  transcription, translation and degradation. The interest of the method is shown by recovering the classical results under the assumptions that all the durations are exponentially distributed but also by deriving new analytic formulas when some of the distributions are not anymore exponential. Our results  show in particular that the exponential assumption may, surprisingly, underestimate significantly the variance of the number of proteins when some steps are in fact not exponentially distributed. This counter-intuitive result stresses the importance of the statistical assumptions in the protein production process. Finally, our approach can also be used to consider more detailed models of the gene expression.
\end{abstract}

\maketitle 

\bigskip

\hrule

\vspace{-3mm}

\tableofcontents

\vspace{-1cm}

\hrule

\bigskip

\section{Introduction}
The aim of the present work is to revisit and generalize the current mathematical results concerning the properties of intrinsic noise in \emph{gene expression}.  The stochastic characterisation of the gene expression in the protein production process has been theoretically studied by means of stochastic models in the late 70s by Berg~\cite{Berg1978} and Rigney~\cite{Rigney1979, Rigney1977} and reviewed recently by Paulsson~\cite{Paulsson}.  For a long period of time, it has not been possible to compare the theoretical results to real data, because of the lack of appropriate laboratory techniques. In the last two decades, the introduction of reliable expression reporter techniques and the use of fluorescent reporters, as the GFP (Green Fluorescent Protein), has allowed observations in live cells and the experimental quantification of the protein production at single cell level. See Taniguchi et al.~\cite{Taniguchi} for the experimental characterization of a large number of messengers and proteins of E. coli.

The good qualitative agreement observed between experiments and the predictions of these earlier stochastic models have stimulated further investigations to take into account the statistical characteristics of the phenomena involved in the protein production process. In this domain, the variance of the number of the cellular components in the cell is a key indicator of the efficiency of a production strategy, since it gives a measure of the fluctuation of resources of the cell consumed by the production process. Clearly enough, this characteristic is directly affected by model design and statistical  assumptions. 

As in the previous works, see Paulsson~\cite{Paulsson} and Swain~\cite{Swain2002}, it is especially important to derive explicit analytical formulas for the variance to assess the impact of the key parameters and, consequently, to get a biological interpretation of the obtained results. 

As it will be seen, the introduction of more realistic statistical assumptions leads to several technical difficulties, the main one being that the classic PDE approach (Fokker-Planck equations) used in the literature can no longer be used. In the present paper, using an approach based on marked Poisson point processes, we relax some statistical assumptions of the earlier stochastic models and obtain general results for a large class of stochastic models. In the biological context, we are then able to derive closed form expressions of the mean and variance of the main characteristics of the production process. 

\subsection{Biological Context and Model Motivations}
We first provide few biological insights about the protein production in living organisms.  The \emph{gene expression} is the process by which the genetic information is synthesised into a functional product, the proteins.  The production of proteins is the most important cellular activity, both for the functional role and the high associated cost in terms of resources (in prokaryotic cells it can reach up to $85$\% of the cellular resources). In particular, in a \emph{E. Coli} bacterium there are about $\num{3.6e6}$ proteins of approximately $2000$ different types with a large variability in concentration, depending on their types: from a few dozen up to $\num{e5}$.

The information flow from DNA genes to proteins is a fundamental process, common to all living organisms and is composed of two main elementary processes: \emph{transcription} and \emph{translation}.  During the transcription process, the \emph{RNA polymerase} binds to an active gene relative to a specific protein and makes a complementary copy of a specific DNA sequence, a \emph{messenger RNA} (mRNA). Each mRNA, which is a long chain of nucleotides, is a chemical ``blueprint'' for a particular protein.  The translation of the messenger into a polypeptide chain is achieved by a large complex molecule: the \emph{ribosome} with the help of some accessory factors like the elongation factor to cite a few.  During translation, the ribosome binds to the messenger and builds the polypeptide chain using mRNA as a template. More in detail, to each mRNA \emph{codon}, a triplet of nucleotides, corresponds a specific amino acid, which is the fundamental component of proteins.  The polypeptide chain of amino acids, folds spontaneously or with the help of \emph{chaperons}, into its functional three-dimensional structure. 

The gene expression is a highly stochastic process and results from the realization of a very large number of elementary stochastic processes of different nature.
The thermal excitation affects many processes, since it implies for example the free diffusion in the cytoplasm in which particles behave basically as if they were plunged into a viscous fluid.  In first approximation, three fundamental mechanisms are combined in the protein production.  The first is the pairing of two cellular components freely diffusing through the cytoplasm and is a direct consequence of the diffusion.  The second mechanism is the ``spontaneous'' rupture of the binding and the release of the two components as the result of thermal excitation.  The last main stochastic process involved is an active one, since it requires/uses energy, and corresponds  to the processing capability of both polymerase and ribosome. The active processes associated to polymerases and ribosomes are highly sophisticated steps, including for example dedicated proof reading mechanisms. In order to proceed to transcription initiation,  gene expression needs a successful binding of the polymerase to a specific DNA motif.  After the initiation step, the messenger chain is built through a series of specific stochastic processes, in which the polymerase recruits one of the four nucleotides in accordance to the DNA template.  A similar description is associated to the translation step.  In particular the protein elongation results in an iterative energy-consuming procedure in which each codon of the messenger chain is coupled with a particular tRNA, which adds a new amino-acid to the growing protein chain by means of ribosome.

In summary, most of the elementary processes  can be schematically seen as the encounter of two components in a viscous fluid. However, the classic approach to gene expression modeling is to group those elementary processes into critic steps as initiation, elongation and degradation, which are common to both transcription and translation. 

\subsection{Mathematical Model}
The corresponding mathematical model is now described.  For all the reasons given so far, the total number of copies of a given protein in the cell is a {\em random} variable $P$.  The cell can thus be thought as a system that produces a given protein with average concentration $\mathbb{E}[P]$, where $\mathbb{E}[X]$ denotes the expected value of a random variable $X$.  The protein concentration, which can vary of several orders of magnitude depending on the protein type, is in direct connection with the various parameters through a quite simple formula, as will be shown in the sequel.  The main objective of the paper is to derive an explicit representation of the variance of the number of proteins in terms of the various parameters of the protein production process.

\medskip
\noindent
\textbf{Gene activation}.
The gene activation involves complex processes among which the main ones are the association/dissociation of a repressor.

Usually the whole process is described as a telegraph process for which a transition from inactive state $0$ to active state $1$ occurs at rate $\lambda_1^+$ and, similarly from state $1$ to state $0$ at rate $\lambda_1^-$.  Here the fundamental assumption is that the distribution of these steps is exponential.  In a prokaryotic cell, there may be several copies of a specific gene and this fact has been included in few models in the past years, see Paulsson~\cite{Paulsson}.  Nevertheless, since we are interested in the variance of the number of proteins, we will assume in the following sections that there is only one copy of the gene.  The analogous result for the case with multiple copies is straightforward to obtain since, by independence, the variance of protein number is proportional to the number of copies of the gene.  

\medskip
\noindent
\textbf{Transcription}. 
A RNA polymerase binds on an active gene in an exponential time with rate $\lambda_2$. This effective rate measures the frequency of transcription initiation and takes into account several physical parameters, including, for example, the \emph{affinity} between the specific gene and the polymerase.
 The distribution $F_2$ on $\R_+$ of the lifetime $\sd$ of a mRNA is assumed to be general.

\medskip
\noindent
{\bf Translation}. Similarly, the binding of a ribosome on an mRNA occurs in an exponentially distributed time with rate $\lambda_3$, which measures the frequency of translation initiation and includes also the affinity between messenger and ribosome. The distribution $F_3$ of the lifetime $\sr$ of the protein is also general. The decay of the protein concentration occurs for two main reasons: by \emph{proteolysis}, \emph{i.e.} the protein degradation into amino acids, or by cellular dilution, due to the cellular volume increase of the bacterium during the exponential growth phase.

This paper is focused in the process of the production of a given protein. For this reason, the interaction with the production process of other proteins is not considered.  

\subsection{Literature: the three-stage model}
This is the fundamental model used to describe gene expression in the literature. 
We can already find these key steps in the first systematic and accurate studies of stochastic models for gene expression, as Rigney~\cite{Rigney1977, Rigney1979} and Berg~\cite{Berg1978}.
In recent years the three-stages model has been used as the fundamental structure in most well-known works of
Shahrezaei and Swain~\cite{Swain2008}, Paulsson~\cite{Paulsson} and Peccoud and Ycart~\cite{Peccoud}.

The promoter of the gene, corresponding to the specific protein of interest, can be in one of two possible states: active or inactive. In these studies transcription, translation and the degradation of proteins and messengers are modeled as first-order chemical reactions, \emph{i.e.} they are supposed to be exponentially distributed (or geometrically distributed in case of a discrete time setting). See Paulsson~\cite{Paulsson} for an extensive survey on the subject. With the above notations, this amounts to say that $\sd$ and $\sr$ are exponentially distributed. 

The assumption of exponentially distributed durations of the various phases of the three-stage model leads naturally to a Markovian modeling. The overall dynamic of gene activation can be described, see Paulsson~\cite{Paulsson},  by the random variable $Y(t)\in\{0,1\}$, where $Y(t)=1$ indicates that the gene is active at time $t$, while $Y(t)=0$ if it is inactive. Recall that we consider, without loss of generality, only the one gene case.  If we denote by $N_2(t)$ the number of mRNAs and by $N_3(t)$ the number of proteins, then it turns out that $(X(t))=(Y(t),N_2(t),N_3(t))$ is a Markov process with values in $\{0,1\}\times\N^2$.  This representation is common to most of the models of the literature. Some of them have, in fact, a lower dimensional state space because of assumptions on the number of mRNAs for example. As a consequence, the general theory of Markov processes gives a system of linear differential equations of order $1$, the Fokker-Planck equations, for the functions $p(t,(y,n_2,n_3))$, the probability that $X(t)$ is in state $(y,n_2,n_3)$ at time $t$. The system of equations has the general form
\begin{multline}\label{PDE}
\frac{\diff}{\diff t} p(t,(y,n_2,n_3))= \lambda_1(y) p(t,(1-y,n_2,n_3))+\lambda_2p(t,(y,n_2-1,n_3))\ind{y=1}\\+ \alpha(n_2) p(t,(y,n_2,n_3-1))+\beta(n_3) p(t,(y,n_2,n_3)).
\end{multline}
The solution of the system has a unique stable point $(\pi(y,n_2,n_3),(y,n_2,n_3)\in\{0,1\}\times\N^2)$, the invariant distribution of the Markov process, whose explicit expression is not known to the best of our knowledge.  Nevertheless, since the coefficients $\alpha(n)$ and $\beta(n)$ are linear with respect to $n$, the moments of the invariant distribution satisfy a recurrence equation. This equation is not completely simple, but gives an explicit expression for the first two moments and, in particular, for the variance, which is the key quantity to investigate these stochastic models.  This is the main theoretical result used in many papers in literature, see Rigney~\cite{Rigney1977}. It should be kept in mind that this approach is possible only under the assumption that {\em all} the duration of the main steps (like the production time of an mRNA or of a protein) are exponentially distributed. This assumptions is now discussed. 

\subsection{Statistical issues: the exponential assumption}
We refer to {\em exponential assumption} when the time to produce a particular cellular component and its lifetime, i.e. $\sd$ and $\sr$, are assumed to be exponentially distributed. 

The exponential assumption is natural in the following simple situation: if a large number of trials are necessary to achieve some goal (like the binding of some elements on a the DNA of an mRNA)  and each trial requires some duration $D$ and succeeds with probability $\alpha$. If $G_\alpha$ is the total number of attempts to succeed, i.e. $\P(G_\alpha\geq n)=(1-\alpha)^n$, then
\[
\lim_{\alpha\to 0} \P(\alpha G_\alpha\geq x)\sim e^{-x},
\]
in other words, if $\alpha$ is small then $\alpha G_\alpha\sim E_1$, where $E_1$ is an exponential random variable with mean $1$. Consequently, the total duration of time necessary to realize the objective is, due to the averaging of the law of large numbers ($G_\alpha$ is large), 
\[
\sum_{i=1}^{G_\alpha} D_i\sim G_\alpha \E(D)\sim \frac{\E(D)}{\alpha}E_1
\]
and is therefore exponentially distributed with mean $\E(D)/\alpha$. 

As it is seen, this scheme may describe correctly the duration of time to establish a binding of a polymerase or, of a ribosome. This scheme may properly describe the time required for a successful binding of RNA polymerase to the gene and of ribosome to mRNA. 

It should be noted that this assumption may not be true if one considers the elongation time of an mRNA or a protein chain. In particular during the polypeptide elongation, each tRNA, transporting a specific amino acid, should bind to the ribosome. If the distribution of the duration of this  step is indeed exponential, nevertheless the fact that elongation steps requires an average number of 100-300 steps, one for each amino acid, then the resulting distribution of the duration of the whole process is not anymore exponential. In first approximation, because of the large number of elongation steps, a deterministic elongation time with a small Gaussian perturbation should be considered. 
One of the main contributions of this paper is to show, via convenient mathematical tools, that the assumption on the distributions of $\sd$ and $\sr$ has an important impact on the qualitative properties of the protein production process.


\subsection{A Marked Point Process Description of Protein Production}
If the distributions of  $\sd$ and $\sr$ are not exponential, a Markovian description of the system is no longer possible, since the residual lifetimes of all the components have to be included in the state variable. In this case, to get a possible analogue of the PDE~\eqref{PDE}, an infinite dimensional state space would be required. For this reason, there is little hope to use, as it has been done up to now in the literature, the equivalent of Fokker-Planck equations to get explicit results like the first moments at equilibrium.  

Our approach consists in representing the state of the system as a functional of several marked point process. See the appendix for the general definitions and results concerning these processes.  If it may be difficult to have a PDE formulation to the problem, we can have a quite detailed description of the distribution of the number of proteins without solving recurrence equations by using an alternative method, which use some nice properties of the point processes.   See Robert~\cite{Robert}. The method is presented in the next section. An extension, see Fromion et al.~\cite{Fromion},  which uses the mathematical approach developped in this paper,  considers a  finer and more complete description of the gene expression. In particular it includes the dilution process during the exponential growth phase. 

\subsection{Outline of the Paper}
Section~\ref{secmodel} introduces the marked Poisson point processes used in the mathematical modeling of the production of proteins. In this model the lifetime of an mRNA or of a protein has a general distribution instead of the exponential assumption used in the models in the literature. Appendix~\ref{secapp} recalls briefly the main results concerning this class of point processes. Section~\ref{secmRNA} gives the main results concerning the equilibrium distribution of the number of mRNAs at equilibrium, the main tool in this analysis is the representation in terms of marked Poisson point processes and a coupling argument.  Section~\ref{secProt} is devoted to the derivation of an explicit formula for the variance of the number of proteins. Several examples of distributions are discussed. 

\section{Stochastic Model}\label{secmodel}
In this section, the various stochastic processes are introduced. In  the appendix we recall the main results and notations concerning the marked Poisson point processes (MPPP) which are used in this paper. 
\subsection*{Gene activation} 
It is assumed that there is one active gene, which is activated at rate $\lambda_1^+$ and inactivated at rate $\lambda_1^-$. Recall  that the assumption that $n_{\text{max}}$ the maximum number of active genes is $1$ does not restrict the generality of our results since the quantities analyzed in this paper (expected values and variances) are proportional to $n_{\text{max}}$.  Let $(E_n)$ and $(F_n)$ be i.i.d. exponential random variables with respective rates $\lambda_1^-$ and $\lambda_1^+$. The process of activation of the gene at equilibrium can be represented as a stationary process $(Y(t), t\in\R)$ with values in $\{0,1\}$. Note that $(Y(t))$ is defined on the whole real line, i.e. that the activation/deactivation process has started at $t=-\infty$. As it will be seen, this is a convenient representation to describe properly the equilibrium of the protein production process. 
The increasing sequence of the instants of activation of the gene is denoted by $(t_n)$ with the convention that $t_0\leq 0<t_1$. In particular
\[
\{t_n, n\in\Z\}=\{s\in\R: Y(s-)=0 \text{ and } Y(s)=1\}
\]
and $t_{n+1}-t_n=E_n+F_n$. Because of our assumption $(t_n)$ is a stationary renewal point process. 




\begin{figure}[h]
	\begin{center}
 	  	\begin{tikzpicture}[node distance=4.5cm,->,>=stealth,thick]
  		\node  (active)  {\includegraphics[width=2cm]{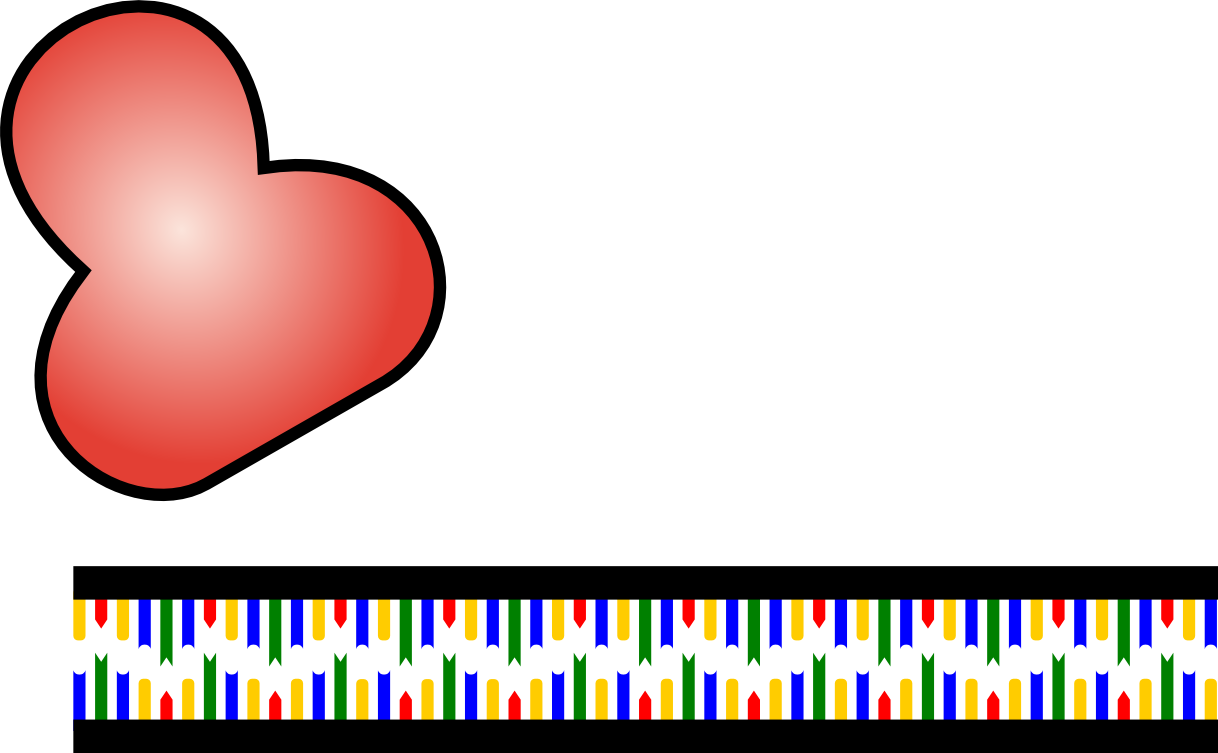}};
		\node (inactive) [below of=active, node distance = 3cm]   {\includegraphics[width=2cm]{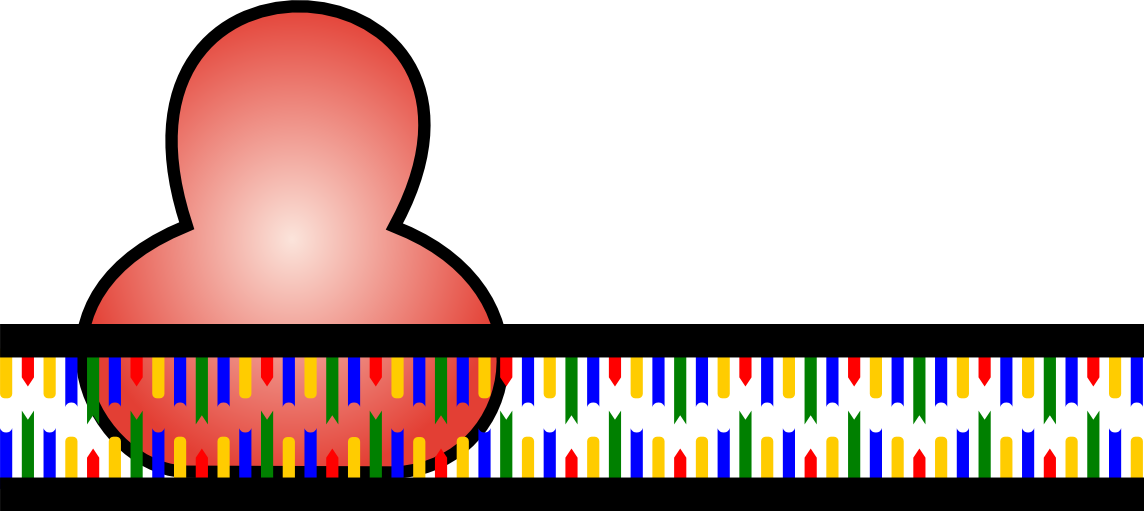}};
		\node (RNA) [right of=active] {\includegraphics[width=2cm]{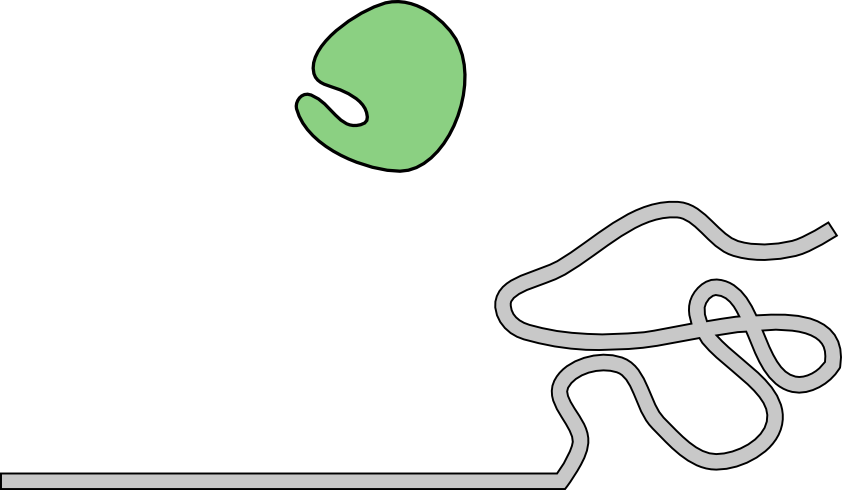}};
		\node (deadRNA) [below of=RNA, node distance = 3cm] {\includegraphics[width=1cm]{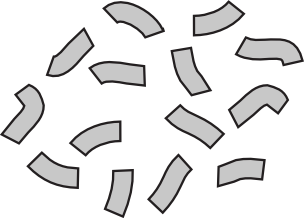}};
		\node (Prot) [right of=RNA] {\includegraphics[width=2cm]{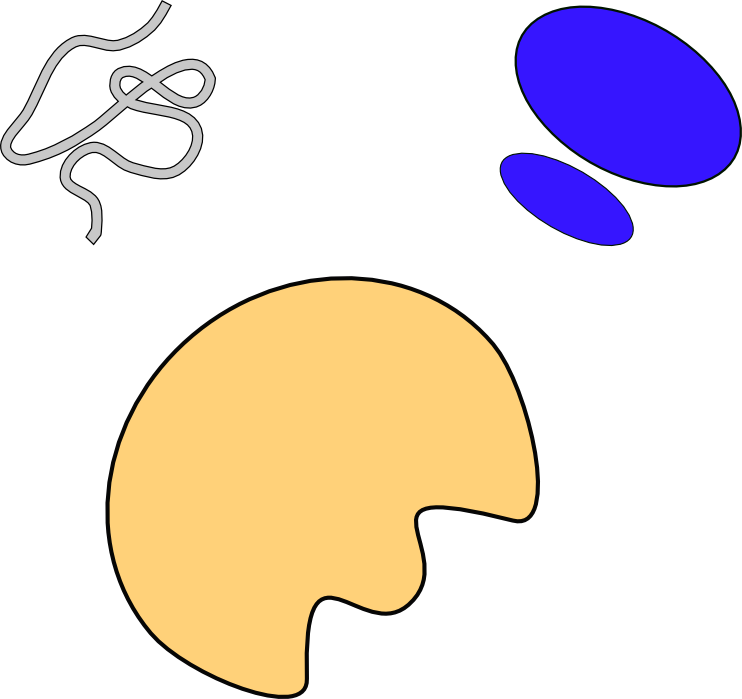}};
		\node (deadProt) [below of=Prot, node distance = 3cm] {\includegraphics[width=1cm]{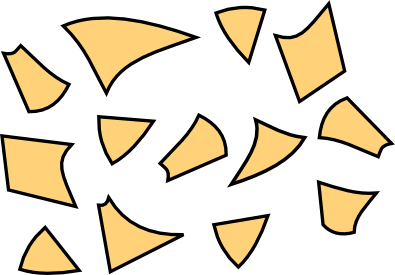}};
				
		\path	(active.south east) edge node [right] {$\lambda_1^-$} (inactive.north east)
						(inactive.north west) edge node [left] {$\lambda_1^+$} (active.south west);
		\path	(RNA) edge node [right] {$\sigma_2$} (deadRNA);
		\path	(Prot) edge node [right] {$\sigma_3$} (deadProt);
		
		\path (active) edge node [below] {$\lambda_2$} node[above=0.2cm]{\includegraphics[width=1.5cm]{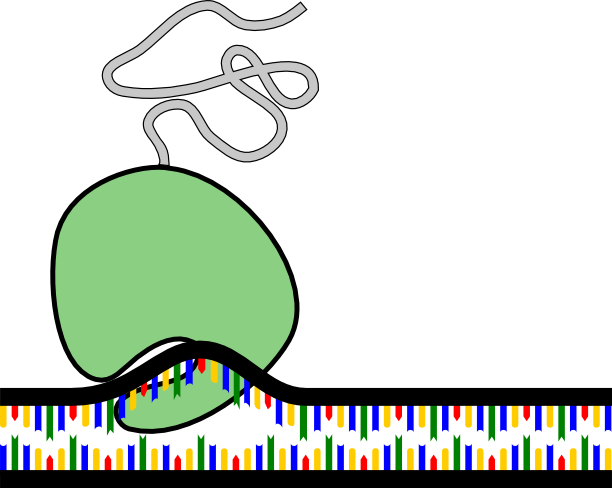}}  (RNA);
		\path (RNA) edge node [below] {$\lambda_3$} node[above=0.2cm]{\includegraphics[width=1cm]{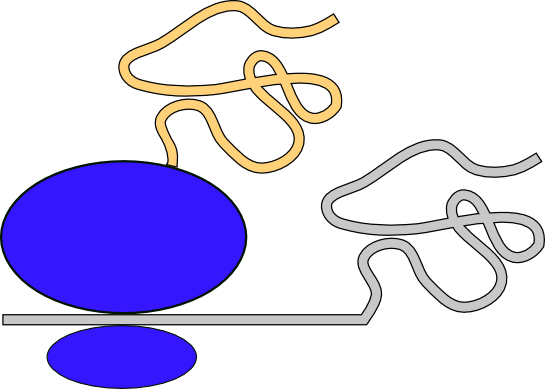}} (Prot);
  	\end{tikzpicture}
  	\end{center}
  	 \caption{Three stage model. The gene activation/deactivation occur at rate $\lambda_1$ and $\mu_1$ respectively. Transcription
  and translation occur at rates $\lambda_2$ and $\lambda_3$ respectively. The degradation times of mRNAs and proteins have probability
  distributions $F_2(\diff t)$ and $F_3(\diff t)$ respectively.}
  \label{fig:fourStageModel}
\end{figure} 

\subsection*{Production of mRNAs}
When the gene is active, it produces mRNAs at rate $\ld$ and $F_2(\diff y)$ is the
distribution of the lifetime of a mRNA. Let  ${\cal N}_{\ld}=(s_n,{\sd}_{,n})$ be a
MPPP on $\R_+^2$ with intensity measure $\ld \diff x\otimes F_2(\diff y)$.

If the gene is, for $s\leq t$, then the formula
\[
{\cal N}_{\ld}([s,t]\times\R)=\sum_{n\in\Z}  \ind{s\leq s_n\leq t}=\int \ind{s\leq u\leq t} {\cal N}_{\ld}(\diff u,\diff v)
\]
represents the total number of mRNAs created between time $s$ and time $t$ and
\[
\sum_{n\in\Z}  \ind{s\leq s_n\leq t \leq s_n+{\sd}_{,n}}=\int \ind{s \leq u\leq t\leq u+v} {\cal N}_{\ld}(\diff u,\diff v)
\]
is the number of mRNAs still alive at time $t$. 
More in general, if we include the gene dynamics into the formula, we find that the number of messengers created in the time interval $[s,t]$
and still alive at time $t$ is
\[
\sum_{n\in\Z}  \ind{s\leq s_n\leq t \leq s_n+{\sd}_{,n},Y(s_n)=1}=\int \ind{s\leq u\leq t\leq u+v, Y(u)=1} {\cal N}_{\ld}(\diff u,\diff v).
\]

\subsection*{Production of Proteins}
A given mRNA produces proteins at rate $\lr$ and $F_3(\diff y)$ is the distribution of the duration of the lifetime of a protein. 

For $u\in\R$, denote by ${\cal N}_{\lr}^{u}$ a MPPP with intensity $\lr \diff x\otimes F_3(\diff y)$. In the following it is the  process of creation of proteins associated to an mRNAs created at time $u$. In particular, if mRNA lifetime is $v$ then
\[
{\cal N}_{\lr}^{u}([u,u+v]\times\R_+)= \int_{[u,u+v]\times\R_+} \,{\cal N}_{\lr}^{u}(\diff x,\diff y)
\]
is the total number of proteins created by such an mRNA during its lifetime.

\bigskip
\noindent
{\bf Remarks.}\\
Here the mRNA is available for translation once a small portion of the growing mRNA chain has been assembled. This assumption is coherent with the prokaryotic dynamics, but should be adapted for the eukaryotic case. In fact in this case we have to wait the completed messenger to be exported to the cytoplasm. If we assume this time to be deterministic, then the previously defined integral should be shifted of a constant value and we should easily get the corresponding analytic results.

The whole process of production of mRNAs and proteins can thus be described by the sequence
\[
{\cal A}=\left(s_n,t_n,{\cal N}_{\lr}^{s_n}\right).
\]
Recall that ${\cal N}_{\lr}^{0}: (\Omega, \cal{F}, \cal{P}) \to {\cal M}_{p}(\R\times\R_+)$, where ${\cal M}_{p}(\R\times\R_+)$ is the set of point processes on $\R\times\R_+$.
If we denote with $\Q$ the distribution of ${\cal N}_{\lr}^0$ on ${\cal M}_{p}(\R\times\R_+)$, 
the process ${\cal A}$ can be seen as a marked Poisson point process on $\R_+\times {\cal M}_{p}(\R\times\R_+)$ with intensity measure $F_3(\diff x)\times\Q$. 
This observation will not be used in the following to keep the setting as simple as possible but the proof of Proposition~\ref{varPprop} below could be shortened by using it together with Proposition~\ref{remind}. 

The notations with some definitions for the stochastic models used in this paper are now summarized.
\begin{notation}
\begin{itemize}\rm
\item {\em Gene activation.}\\
The activation rate is [resp. inactivation rate]  is $\lambda_1^+$ [resp.  $\lambda_1^-$] and
\[
\delta_+=\frac{\lambda_1^+}{\Lambda} \text{ and } \Lambda=\lambda_1^++\lambda_1^-.
\]
\item {\em mRNA production.}\\
The rate of production of mRNAs by an active gene is $\ld$, $F_2(\diff x)$ is the distribution of an mRNA lifetime, $\sd$ denotes a random variable with distribution $F_2$ and
\[
\rho_2\stackrel{\text{def.}}{=} \ld \E(\sd)=\ld \int_{\R_+}x F_2(\diff x). 
\]
\item {\em Protein production.}\\
The rate of production of proteins by an mRNA is $\lr$, the lifetime distribution of a protein  is $F_3(\diff x)$, $\sr$  denotes a random variable with distribution $F_3$ and
\[
\rho_3\stackrel{\text{def.}}{=} \lr \E(\sr)=\lr \int_{\R_+}x F_3(\diff x). 
\]

\end{itemize}
\end{notation}

\section{Equilibrium Distribution of the Number of  mRNAs}\label{secmRNA}
This section investigates the first part of the protein production process: activation of the convenient gene and production of mRNAs. 
\subsection{State of the Gene}
The behavior of the process $(Y(t))$ is well known. Once equilibrium has been reached, it results
\[
\P(Y(0)=1)=\delta_+=\frac{\lambda_1^+}{\lambda_1^++\lambda_1^-}=1-\P(Y(0)=0). 
\]
To express the variance of the number of proteins, the following quantity is required, for $t\geq 0$, 
\begin{equation}\label{covY}
\P(Y(t)=1|Y(0)=1)=\delta_++(1-\delta_+)e^{-\Lambda t},
\end{equation}
with $\Lambda=\lambda_1^++\lambda_1^-$. See Norris~\cite{Norris} and Peccoud and Ycart~\cite{Peccoud} for detailed computations.
From now on, it will be assumed that $(Y(t))$ is defined on $\R$ and is at equilibrium. 

\subsection{Number of mRNAs}
A result on the number of mRNAs at equilibrium and its distribution is derived in this section. The techniques used to prove it will also be used to investigate the distribution of the number of proteins in the next section.
In order to present the MPPP approach, we will develop computations for mRNAs, since they are simpler from the point of view of notations, but include the main ideas.

\begin{proposition}\label{mRNAeq}
The number $M$ of mRNA's at equilibrium  can be represented as
\begin{equation}
M= \int_{\R\times\R_+}  \ind{u\leq 0\leq u+v, Y(u)=1}  \,{\cal N}_{\ld}(\diff u,\diff v),
\end{equation}
where ${\cal N}_{\ld}$ is a Poisson marked point process with intensity  $\ld \diff x\otimes F_2(\diff y)$.
\end{proposition}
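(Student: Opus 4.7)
The plan is to identify $M$ directly with a sum over those atoms of ${\cal N}_{\ld}$ that correspond to mRNAs still alive at time $0$, and then to verify that this expression is a.s. finite and yields the equilibrium law. The key is that because $(Y(t))$ has already been extended to a stationary process on all of $\R$, no initial-condition argument is needed: the representation at time $0$ is the equilibrium representation by construction.

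First, I would recall from the construction in Section~\ref{secmodel} that each atom $(s_n,\sigma_{2,n})$ of ${\cal N}_{\ld}$ represents a candidate creation event at time $s_n$ with associated lifetime $\sigma_{2,n}$, and that an actual mRNA is created at $s_n$ only when the gene is active, i.e. $Y(s_n)=1$. This amounts to a thinning of ${\cal N}_{\ld}$ by the indicator $\ind{Y(u)=1}$, where $(Y(t))$ is independent of ${\cal N}_{\ld}$. An mRNA created at $s_n$ is still alive at time $0$ exactly when $s_n\leq 0 \leq s_n+\sigma_{2,n}$, so summing over $n$ gives
\begin{equation*}
M \;=\; \sum_{n\in\Z} \ind{s_n\leq 0\leq s_n+\sigma_{2,n},\,Y(s_n)=1} \;=\; \int_{\R\times\R_+}\ind{u\leq 0\leq u+v,\,Y(u)=1}\,{\cal N}_{\ld}(\diff u,\diff v),
\end{equation*}
which is the announced identity. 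Being a definitional reformulation, this step is essentially immediate once the MPPP description of the model is accepted.

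Second, I would verify that the integral is a.s. finite and that it corresponds to the stationary law. Campbell's formula from the appendix, applied with the independence of ${\cal N}_{\ld}$ and $(Y(t))$ and using $\P(Y(u)=1)=\delta_+$ for all $u$, gives
\begin{equation*}
\E[M] \;=\; \ld\int_{-\infty}^{0}\P(Y(u)=1)\,\P(\sd\geq -u)\,\diff u \;=\; \ld\,\delta_+\,\E[\sd] \;=\; \delta_+\,\rho_2 \;<\;\infty,
\end{equation*}
so $M<\infty$ almost surely. Shift-invariance of $(Y(t))$ together with the translation invariance of the intensity $\ld\,\diff x\otimes F_2(\diff y)$ then shows that the law of the right-hand side does not depend on the chosen evaluation time; hence it coincides with the stationary distribution of the number of mRNAs.

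The main obstacle is conceptual rather than technical: one must convince oneself that fixing $(Y(t))$ as stationary on all of $\R$ and integrating against a Poisson process with translation-invariant intensity circumvents the Markovian age-tracking that would otherwise be unavoidable once $F_2$ is not exponential. Once this observation is made, the proof reduces to the thinning description above combined with Campbell's formula; no PDE or recurrence equation is needed.
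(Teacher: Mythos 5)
Your identification of the integrand with the atoms of ${\cal N}_{\ld}$ alive at time $0$, the thinning by $\ind{Y(u)=1}$, and the finiteness check via Campbell's formula are all correct, and the finiteness verification is actually a point the paper defers to Proposition~\ref{mRNAprop}. However, there is a gap at exactly the step you declare unnecessary. The content of the proposition is that the displayed functional has the \emph{equilibrium} distribution of the number of mRNAs, i.e.\ the limit in distribution of the transient count $M_t$ started from a natural initial condition. Writing down a doubly-infinite stationary construction and observing that its time-$0$ marginal is shift-invariant only exhibits \emph{a} stationary regime; it does not by itself identify that regime with the limit of the dynamics, which is what ``at equilibrium'' means here. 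Your phrase ``the representation at time $0$ is the equilibrium representation by construction'' assumes the conclusion rather than proving it.

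The paper closes this gap with a short coupling argument that you should supply: start with no mRNAs at time $0$, write
\[
M_t=\int_{\R_+}\int_0^t \ind{u\leq t\leq u+v,\,Y(u)=1}\,{\cal N}_{\ld}(\diff u,\diff v),
\]
use joint translation invariance of $(Y(t))$ and of the intensity $\ld\,\diff x\otimes F_2(\diff y)$ to shift by $-t$, so that $M_t$ has the same law as $\int_{\R_+}\int_{-t}^{0}\ind{0\leq u+v,\,Y(u)=1}\,{\cal N}_{\ld}(\diff u,\diff v)$, and then let $t\to\infty$; the shifted integrals increase monotonically to the claimed expression, which is a.s.\ finite by your Campbell computation, so $M_t$ converges in distribution to it. This is the step that replaces the Markovian age-tracking you correctly note is unavailable when $F_2$ is not exponential, and it is the one idea your proposal is missing.
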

\begin{proof}
Suppose there are no mRNAs at starting time $0$,
then the number $M_t$ of mRNAs at time $t$ is given by
\[
M_t=\sum_{n} \ind{0\leq s_n\leq t \leq s_n+{\sd}_{,n}, Y(s_n)=1}= \int_{\R_+} \int_0^t  \ind{u\leq t\leq u+v, Y(u)=1}  \,{\cal N}_{\ld}(\diff u,\diff v),
\]
if ${\cal N}_{\ld}=(s_n,{\sd}_{,n})$ as defined in Section~\ref{secmodel}. Recall that $s_n$ is the (potential) $n$th binding time of a polymerase on the gene: an mRNA is created only if the gene is active, i.e. $Y(s_n)=1$. The term ${\sd}_{,n}$ represents the lifetime of the newly produced mRNA.
The right-hand-side of the previous equation accounts for the number of mRNAs produced in the interval $[0,t]$ and still alive at time $t$ ($u+v\geq t$). 

Since the process $(Y(t))$ is stationary as well as the Poisson marked point process, they are both invariant by translation. By translating by $-t$, one gets that $M_t$ has the same distribution as 
\[
M_t\stackrel{\text{dist.}}{=} \int_{\R_+} \int_{-t}^0  \ind{0\leq u+v, Y(u)=1}  \,{\cal N}_{\ld}(\diff u,\diff v),
\]
by letting $t$ go to infinity, one obtains the desired result. 
\end{proof}

\noindent
{\bf Remark}\\ It is crucial that the distribution of $M_t$ can be explicitly expressed as a functional of the marked Poisson process ${\cal N}_{\ld}$. The same property is true for its limit.  In this context, with the help of the coupling argument, there is no need of a Markovian setting to prove that $M_t$ converges in distribution as $t$ goes to infinity. As will be seen, the distribution of the limit $M$ can be obtained by using some properties of Poisson point processes.  For all these reasons, there is no need to impose the random variables $\sd$ and $\sr$ to be exponentially distributed.

In the proof of the above result, we have in fact proved a more general result.
\begin{theorem}\label{thmRNA}
 The point process ${\cal M}$ representing the instants of creation of mRNAs and the associated
lifetime at equilibrium can be represented as
\begin{equation}\label{mRNApp}
{\cal M}= \int_{\R \times \R_+}  \ind{Y(u)=1} \delta_{(u,v)} \,{\cal N}_{\ld}(\diff u,\diff v),
\end{equation}
where $\delta_z$ is the Dirac mass at $z$.
\end{theorem}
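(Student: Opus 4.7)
The plan is to extend the proof of Proposition~\ref{mRNAeq} from the counting functional $M_t$ to the full marked point process of creation events. The structure of the argument is identical; one simply drops the ``still alive at time $0$'' indicator $\ind{0\leq u+v}$ and keeps the full Dirac mass at $(u,v)$ rather than integrating it away.

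First, I would consider the system started with no mRNAs at time $0$ and let
\[
\cal{M}_t = \int_{\R\times\R_+} \ind{0\leq u\leq t,\,Y(u)=1}\,\delta_{(u,v)} \,\cal{N}_{\ld}(\diff u,\diff v)
\]
be the point process recording the creation time and associated lifetime of every mRNA produced during $[0,t]$. By construction, each atom of $\cal{N}_{\ld}$ falling in $[0,t]$ at a moment when the gene is active contributes exactly one point $(u,v)$, so $\cal{M}_t$ is the correct observable for the transient system before equilibrium.

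Second, I would invoke the joint stationarity of $(Y(u))$ and $\cal{N}_{\ld}$: the shift $u\mapsto u-t$ preserves their joint distribution, so after relabelling the origin at $t$ one obtains
\[
\cal{M}_t \stackrel{\text{dist.}}{=} \int_{\R\times\R_+} \ind{-t\leq u\leq 0,\,Y(u)=1}\,\delta_{(u,v)} \,\cal{N}_{\ld}(\diff u,\diff v).
\]
Letting $t\to\infty$, the right-hand side is a monotone nondecreasing random point measure (atoms only accumulate as $t$ grows), so it converges to
\[
\cal{M} = \int_{\R\times\R_+} \ind{Y(u)=1}\,\delta_{(u,v)} \,\cal{N}_{\ld}(\diff u,\diff v),
\]
which is the claimed representation. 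The same coupling remark following Proposition~\ref{mRNAeq} shows that any dependence on the initial condition is washed out in the limit, so this limit is indeed the equilibrium point process.

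The main subtlety, and where I would need to be careful, is the mode of convergence of random point measures on an unbounded domain. The clean way to handle this is to test against arbitrary bounded Borel sets $B\subset\R\times\R_+$: for such $B$, the restrictions $\cal{M}_t(B\cap([-t,0]\times\R_+))$ stabilize in finite time, so convergence holds on every bounded window, which is precisely the standard notion of vague (or local) convergence for locally finite point processes. Once this is granted, the theorem requires no further computation: the representation is simply the tautological statement that, at stationarity, an mRNA is created at every atom of $\cal{N}_{\ld}$ whose time component lies on the active set $\{u : Y(u)=1\}$, with the second coordinate serving as its lifetime.
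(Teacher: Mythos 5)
Your proposal is correct and follows essentially the same route as the paper: the paper does not give a separate proof of this theorem but simply observes that the translation-and-limit argument of Proposition~\ref{mRNAeq} already establishes the point-process representation, which is exactly the argument you reproduce at the level of random measures. Your added care about vague convergence on bounded windows is a sensible technical refinement of the same idea, not a different approach.
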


The number of mRNAs alive at equilibrium can thus be represented as
\[
M =\int \ind{ u\leq t\leq u+v} {\cal M}(\diff u,\diff v)=\int \ind{ u\leq t \leq u+v, Y(u)=1} {\cal N}_{\ld}(\diff u,\diff v)
\]
which is precisely the expression of Proposition~\ref{mRNAeq}. When the activation rate of the gene goes to infinity, the point process ${\cal M}$ is simply a marked Poisson point process and $M$ has a Poisson distribution with parameter $\rho_2=\ld\E(\sd)$. 

We now use this representation to get an explicit expression of the variance of the number of mRNAs at equilibrium. 
\begin{proposition}\label{mRNAprop}
If the distribution of the lifetime of a mRNA is $F_2(\diff x)$, the average of the number $M$ of mRNAs at equilibrium is given by 
\[
\E(M)=\delta_+ \rho_2=\frac{\lambda_1^+}{\lambda_1^++\lambda_1^-} \ld\int xF_2(\diff x).
\]
The variance of $M$ is
\begin{equation}\label{varmRNA}
\var(M)=\E(M)
+2\rho_2^2\delta_+(1-\delta_+)\int_0^{+\infty} e^{-\Lambda v} \overline{F}_2(u)\overline{F}_2(u+v)\,\diff u\, \diff v
\end{equation}
where $F_2(x)=F_2([0,x])$ and $\overline{F}_2(x)=(1-F_2(x))/\E(\sd)$. 
\end{proposition}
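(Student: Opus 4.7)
The plan is to exploit the representation of Proposition~\ref{mRNAeq} together with the independence of the MPPP ${\cal N}_{\ld}$ and the gene-activation process $(Y(t))$. First I would condition on the trajectory $Y=(Y(t),t\in\R)$: by Campbell's formula (of the kind recalled as Proposition~\ref{remind} in the appendix),
\[
\E[M\mid Y] \;=\; \ld\int_{\R\times\R_+}\ind{u\leq 0\leq u+v,\,Y(u)=1}\,\diff u\, F_2(\diff v),
\]
and taking expectations, using $\P(Y(u)=1)=\delta_+$ together with $\int_0^\infty(1-F_2(s))\,\diff s=\E(\sd)$, immediately gives $\E(M)=\delta_+\rho_2$.

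For the variance I would use the decomposition $\var(M)=\E[\var(M\mid Y)]+\var(\E[M\mid Y])$. Conditional on $Y$, the random variable $M$ is the integral of a deterministic indicator against a Poisson point process, so it is itself Poisson; consequently $\var(M\mid Y)=\E[M\mid Y]$, and the first term is just $\E(M)$, which accounts for the leading $\E(M)$ in~\eqref{varmRNA}.

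The second term expands to
\[
\var(\E[M\mid Y]) \;=\; \ld^2\iint\ind{u_1\leq 0\leq u_1+v_1}\ind{u_2\leq 0\leq u_2+v_2}\mathrm{Cov}(Y(u_1),Y(u_2))\,\diff u_1\,\diff u_2\,F_2(\diff v_1)F_2(\diff v_2).
\]
Extending~\eqref{covY} to $\R$ by stationarity yields $\mathrm{Cov}(Y(u_1),Y(u_2))=\delta_+(1-\delta_+)e^{-\Lambda|u_1-u_2|}$. Integrating $v_1,v_2$ out against $F_2$ replaces the indicators with the survival factors $1-F_2(-u_i)$; symmetrising to $u_2<u_1\leq 0$ at the cost of a factor $2$ and substituting $a=-u_1$, $w=u_1-u_2$ turns the remaining double integral into $\int_0^\infty\int_0^\infty e^{-\Lambda w}(1-F_2(a))(1-F_2(a+w))\,\diff a\,\diff w$. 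Normalising each $1-F_2$ by $\E(\sd)$ to form $\overline{F}_2$ and using $\rho_2^2=\ld^2\E(\sd)^2$ then reproduces~\eqref{varmRNA}.

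The main obstacle is not conceptual but bookkeeping: producing the symmetric covariance kernel $e^{-\Lambda|u_1-u_2|}$ from~\eqref{covY} (which is only stated for $t\geq 0$) and carrying out the change of variables carefully so that $\overline{F}_2(u)$ and $\overline{F}_2(u+v)$ appear in their intended roles. The conceptual point worth stressing is that this scheme---condition on $Y$, then use the first- and second-moment identities for Poisson point processes---bypasses the Markov/Fokker--Planck machinery entirely and therefore places no restriction whatsoever on the distribution $F_2$ of $\sd$.
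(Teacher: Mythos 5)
Your proposal is correct and follows essentially the same route as the paper: condition on the trajectory of $(Y(t))$, use the fact that $M$ is then a Poisson functional of ${\cal N}_{\ld}$ (via Proposition~\ref{remind}), and reduce the variance to a double integral against the covariance kernel $\delta_+(1-\delta_+)e^{-\Lambda|u_1-u_2|}$ of the telegraph process. The only cosmetic difference is that you invoke the law of total variance with $\var(M\mid Y)=\E(M\mid Y)$ directly, whereas the paper differentiates the conditional generating function twice to get $\E(M(M-1)\mid Y)=\E(M\mid Y)^2$ — these are the same computation packaged differently, and your change of variables reproduces \eqref{varmRNA} exactly.
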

\begin{proof}
Conditionally on the process $(Y(t))$, $M$ follows a Poisson distribution, hence for $z\in[0,1]$,
\begin{align}
\E\left(z^M\mid (Y(t))\right)
&=\exp\left({-}\ld(1{-}z) \int_{\R_+} \int_{-\infty}^{0}  \ind{Y(u)=1,u+v>0}\,\diff u\,\sd(\diff v)\right) \notag \\
&=\exp\left({-}\ld(1{-}z) \int_0^{+\infty}  \ind{Y(-u)=1}\P(\sd\geq u) \,\diff u\right)
\label{eq1}
\end{align}
by taking $f(u,v)=\ind{Y(u)=1,u\leq 0,u+v>0}$ in Relation~\eqref{Lap}.
If we differentiate formula \eqref{eq1} with respect to $z$ and take $z=1$, we obtain
\[
\E\left(M\mid (Y(t))\right)=
\ld \int_{-\infty}^{0} \ind{Y(u)=1}\P(\sd\geq -u) \,\diff u,
\]
since $(Y(t))$ is at equilibrium,  $\P(Y(u)=1)=\delta_+$, hence integrating the last relation we get
\[
\E(M)=
\delta_+\ld \int_{-\infty}^{0} \P(\sd\geq -u) \,\diff u
=\delta_+ \ld\E(\sd).
\]
If we differentiate twice Formula~\eqref{eq1} and substitute $z=1$, we obtain
\begin{multline*}
\E(M(M-1)\mid (Y(t)))=\ld^2\left(\int_{0}^{+\infty}\ind{Y(-u)=1}\P(\sd\geq u) \,\diff u\right)^2
\\=\ld^2\int_{\R_+^2} \ind{Y(-u)=1,Y(-v)=1}\P(\sd\geq u) \P(\sd\geq v) \,\diff u\,\diff v,
\end{multline*}
which, integrated with respect to $(Y(t))$, gives
\[
\E\left(M^2\right)-\E(M)=\ld^2\int_{\R_+^2} \P(Y(-u)=1,Y(-v)=1)\P(\sd\geq u, \overline{\sd}\geq v) \,\diff u\,\diff v,
\]
where the random variable $\overline{\sd}$ is independent of $\sd$ and has the same distribution.  
Using relation~\eqref{covY}, for $u\leq v$ and $\Lambda=\lambda_1^++\lambda_1^-$, we get
\begin{multline*}
\P(Y(-u)=1,Y(-v)=1)=\P(Y(-v)=1)\P(Y(-u)=1\mid Y(-v)=1)
\\=\delta_+\left(\delta_++ (1-\delta_+)e^{-\Lambda(v-u)}\right).
\end{multline*}
Therefore $\E(M^2)-\E(M)$ is the sum of 
\[
\ld^2\delta_+^2 \int_{\R_+^2} \P(\sd\geq u, \overline{\sd}\geq v) \,\diff u\,\diff v=(\ld\delta_+\E(\sd))^2=(\E(M))^2
\]
and, up to the multiplicative factor $2\ld^2\delta_+(1-\delta_+)$, of
\[
\int_{\R_+^2} \P(\sd\geq u, \overline{\sd}\geq v)e^{-\Lambda(v-u)}\ind{u\leq v} \,\diff u\,\diff v.
\]
The proposition is proved. 
\end{proof}

\subsection*{Normalized variance}
By Relation~\eqref{varmRNA}, the normalized variance of $M$ is defined as 
\[
\frac{\var(M)}{\E(M)^2}=\frac{1}{\E(M)}+2\frac{1-\delta_+}{\delta_+}\int_0^{+\infty} e^{-\Lambda v} \overline{F}_2(u)\overline{F}_2(u+v)\,\diff u\, \diff v.
\]
When the mean $\E(M)$ is fixed, the only quantity  which depends on the distribution of the lifetime of an mRNA is the integral
\[
I_{F_2}= \int_0^{+\infty} e^{-\Lambda v} \overline{F}_2(u)\overline{F}_2(u+v)\,\diff u\, \diff v.
\]
To conclude this section, 
we now apply the previous general formulas to specific choices of the probability distribution. In particular we will get analytical formula of the previous integral for exponential and deterministic distributions. These assumptions are not completely realistic from a biologic point of view, nevertheless they are used to stress the impact of probability distribution on the messenger variance.
If the distribution of the lifetime of an mRNA  is the exponential distribution $E_{\mu_2}$  with parameter $\mu_2$, one gets 
\[
I_{E_{\mu_2}}=\frac{1}{2\mu_2(\Lambda+\mu_2)}.
\]
If the lifetime of an mRNA  is the deterministic distribution $D_{\mu_2}$ with a unit mass at  $1/\mu_2$, the above formula yields
\[
I_{D_{\mu_2}}=\frac{1}{\Lambda^2}\left(e^{-\Lambda/\mu_2}-1+\frac{\Lambda}{\mu_2}\right).
\]
Straightforward calculations with these formulas show that $I_{E_{\mu_2}}{\leq}I_{D_{\mu_2}}$. The ratio $I_{D_{\mu_2}}/I_{E_{\mu_2}}$ varies in fact between $1$ and $2$, see Figure~\ref{fig0}.  The variance for the exponential distribution is smaller than the one for the deterministic distribution with the same mean. This result is not quite intuitive if one takes into account that the variance of the exponential distribution is quite large. 

\begin{figure}[htp]\label{fig0}
\begin{center}
\scalebox{0.8}{\includegraphics{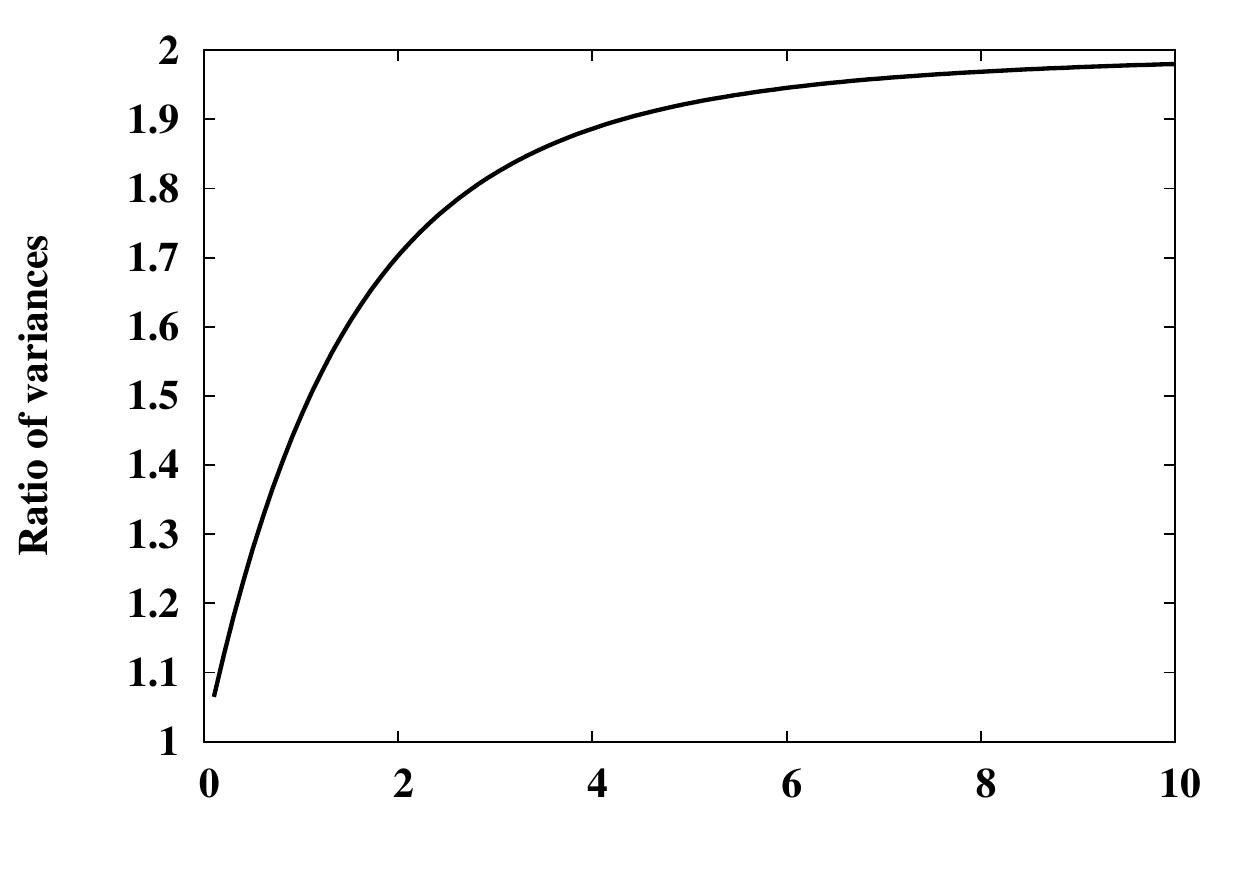}}
\put(-285,155){\rotatebox{90}{$\mathbf{I_{D_{\mu_2}}/I_{E_{\mu_2}}}$}}
\put(-55,5){$\mathbf{\Lambda/\mu_2}$}
\end{center}
\caption{Ratio of Variances of nb of mRNAs: Deterministic/Exponential}
\end{figure}

\section{Variance of the Number of  Proteins at equilibrium}\label{secProt}
Recall that if an mRNA is created at time $u$ and has a lifetime $v$, then on the time interval $[u,u+v]$ proteins are created according to the marked Poisson point process ${\cal N}_{\lr}^{u}$ with intensity $\lr\,\diff x\otimes\, F_3(\diff y)$.
 The instants of creation of proteins together with their lifetimes can thus be represented by the following point process
\begin{equation}\label{protpp}
{\cal P}=\int_{\R\times\R_+}{\cal M}(\diff u,\diff v) \int_{[u,u+v]\times\R_+} \delta_{(x,y)}\,{\cal N}_{\lr}^{u}(\diff x,\diff y),
\end{equation}
where ${\cal M}$ is the point process defined by formula~\eqref{mRNApp}.

\begin{proposition}
The number $P$ of proteins at equilibrium can be represented by the random variable
\begin{equation}\label{protnb}
P= \int_{\R\times \R_+} \ind{Y(u)=1}\, {\cal N}_{\ld}(\diff u,\diff v)\int_{\R\times\R_+}  \ind{x\leq 0 \leq x+y, u\leq x\leq u+v } {\cal N}_{\lr}^{u}(\diff x,\diff y).
\end{equation}
\end{proposition}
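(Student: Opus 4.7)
The plan is to mirror the argument used for Proposition~\ref{mRNAeq}, lifting it from one layer of marked Poisson process (mRNAs produced by an active gene) to two layers (proteins produced by an mRNA that was produced by an active gene). The key structural observation is that the whole description of the production process is time-translation invariant: the stationary process $(Y(t))$, the MPPP $\mathcal{N}_{\lambda_2}$, and the family of MPPPs $(\mathcal{N}_{\lambda_3}^u)_{u\in\R}$ (which by construction are attached to each potential mRNA birth time $u$) are jointly stationary under global time-shifts. This is exactly the property that was invoked in Proposition~\ref{mRNAeq} and that the remark following Theorem~\ref{thmRNA} emphasizes.

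First, I would start the system empty at time $0$ and write down the number $P_t$ of proteins alive at time $t$. Going through the ingredients in order: an mRNA born at time $u\in[0,t]$ is present only if $Y(u)=1$ and contributes proteins through $\mathcal{N}_{\lambda_3}^u$ on the time strip $[u,u+v]$, where $v$ is the mRNA's lifetime drawn from $\mathcal{N}_{\lambda_2}$; such a protein $(x,y)$ is alive at time $t$ iff $x\leq t\leq x+y$. This directly gives
\begin{equation*}
P_t=\int_{\R\times\R_+}\ind{0\leq u, Y(u)=1}\,\mathcal{N}_{\lambda_2}(\diff u,\diff v)\int_{\R\times\R_+}\ind{x\leq t\leq x+y,\, u\leq x\leq u+v}\,\mathcal{N}_{\lambda_3}^u(\diff x,\diff y),
\end{equation*}
which is the natural two-level analogue of the expression for $M_t$ in the proof of Proposition~\ref{mRNAeq}.

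Next, I would apply the global translation by $-t$. By joint stationarity, the distribution of $P_t$ is unchanged if one simultaneously shifts $Y$, the arrival times $s_n$ carried by $\mathcal{N}_{\lambda_2}$, and all the inner processes $\mathcal{N}_{\lambda_3}^{s_n}$; this turns the constraint $x\leq t\leq x+y$ into $x\leq 0\leq x+y$ and replaces $0\leq u$ by $-t\leq u$. Note that on the support of the inner integrand, one has $u\leq x\leq 0$, so the constraint $-t\leq u$ becomes vacuous as $t\to\infty$. Letting $t\to\infty$ and invoking monotone convergence (the integrands are nonnegative and increase in $t$) yields the claimed representation for $P$.

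The main obstacle is simply bookkeeping around the joint-stationarity step: one has to be comfortable with the fact that the family of inner MPPPs is indexed by (random) points of the outer MPPP and still inherits translation invariance, which is exactly the viewpoint provided by the marked-Poisson-process-on-$\mathcal{M}_p(\R\times\R_+)$ description of $\mathcal{A}$ given just before the Notations block. Once this is accepted, every other step is a direct transposition of the mRNA argument, with no new technical difficulty and, in particular, no need to assume $\sigma_2$ or $\sigma_3$ exponentially distributed.
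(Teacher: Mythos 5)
Your proposal is correct and follows essentially the same route as the paper, which simply states that the representation ``follows the same lines of the proof of Proposition~\ref{mRNAeq}''; you have merely spelled out the coupling-plus-translation argument (start empty at time $0$, shift by $-t$ using joint stationarity of $(Y(t))$, ${\cal N}_{\ld}$ and the family $({\cal N}_{\lr}^{u})$, then let $t\to\infty$ by monotone convergence) that the paper leaves implicit.
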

\begin{proof}
The derivation is quite straightforward. If an mRNA alive between time $u$ and $u+v$ generates a protein at time $x$ with lifetime $y$, this protein will be present at time $0$ if $x\leq 0\leq x+y$. The argument that this is indeed the representation of the number of proteins at equilibrium follows the same lines of the proof of Proposition~\ref{mRNAeq}.
\end{proof}

Before the main technical result of the paper, we can get information on the distribution of $P$ using formula \eqref{protnb}.
We start with the simple case of the mean. For fixed $u$, $v\in\R_+$, formula \eqref{meanpp} gives
\[
\E\left(\int_{\R\times\R_+}  \ind{\substack{x\leq 0 \leq x+y,\\ u\leq x\leq u+v} } {\cal N}_{\lr}^{u}(\diff x,\diff y) \right)
=\lr\int_{\R\times\R_+}  \ind{\substack{x\leq 0 \leq x+y,\\ u\leq x\leq u+v }} \,\diff x\,F_3(\diff y) 
\]
Integrating this expression with respect to $\ind{Y(u)=1}\, {\cal N}_{\ld}(\diff u,\diff v)$ and taking its expectation, we get
\begin{eqnarray*}
    \lefteqn{\E [P\mid (Y(t))] =} \\
    &=& \lr\left.\E\left(\int \ind{Y(u)=1}\left[ \int \ind{\substack{x\leq 0 \leq x+y,\\ u\leq x\leq u+v }} \,\diff x\,\sr(\diff y) \right] \, {\cal N}_{\ld}(\diff u,\diff v)\right|(Y(t))\right)\\
	&=& \lr\int \ind{Y(u)=1}\left[\int \ind{\substack{x\leq 0 \leq x+y,\\ u\leq x\leq u+v }} \,\diff x\,\sr(\diff y) \right]\,\ld \diff u\,F_2(\diff v)\\
	&=& \ld\lr\int \ind{Y(u+x)=1} \P(\sd\geq -u)\P(\sr\geq -x)\,\diff x\,\diff u,
\end{eqnarray*}
where we used again formula \eqref{meanpp}. A further integration gives finally the expectation
\begin{align*}
\E(P)&=\ld\lr\int_{\R} \P(Y(u+x)=1)\P(\sd\geq -u)\P(\sr\geq -x)\,\diff x\,\diff u\\
&=\ld\lr\delta_+\int \P(\sd\geq -u)\P(\sr\geq -x)\,\diff x\,\diff u=\delta_+\ld\E(\sd)\lr\E(\sr),
\end{align*}
with the notation introduced in Proposition~\ref{mRNAprop}.

\begin{theorem}\label{varPprop}
If the distribution of the lifetime of a mRNA [resp. protein] is $F_2(\diff x)$ [resp. $F_3(\diff y)$], then 
the expected value of the random variable $P$, which is the number of proteins at equilibrium, is given by
\[
\E(P)=\delta_+\rho_2\rho_3=\frac{\lambda_1^+}{\lambda_1^++\lambda_1^-} \ld\int xF_2(\diff x)\lr\int xF_3(\diff y)
\]
and its variance $\var(P)$ can be expressed as 
\begin{align}\label{varPeq}
\var(P) &= \E(P) + \lambda _2 \rho_3^2 \delta_+ \int _0^{+\infty} \int_{\mathbb{R}_+} \left[
\int_{-s}^{(-s+t)\wedge 0}\overline{F}_3(u)\,\diff u \right]^2 \,\diff s F_2(\diff t)  \\
 &+ \rho_2^2 \rho_3^2 \delta_+(1-\delta_+) \int _{\mathbb{R} ^4 _+}
e^{-\Lambda|(u_1-u_2)+(v_1-v_2)|} \prod_{i=1}^2\overline{F}_2(u_i)\overline{F}_3(v_i) \diff u_i\diff v_i,\notag
\end{align}
where, for $j=2$, $3$, $F_j(x)=F_j([0,x])$ and $\overline{F}_j(x)=(1-F_j(x))/\E(\sigma_j)$. 
\end{theorem}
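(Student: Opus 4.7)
The mean $\E(P)$ is already computed just above the theorem, so the plan focuses on $\var(P)$. The strategy is to peel off, by successive conditioning, the two independent layers of Poisson randomness appearing in the representation~\eqref{protnb}: the mRNA marks ${\cal N}_{\ld}$ and the per-mRNA protein marks $({\cal N}_{\lr}^{u})_u$.

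\textbf{First conditioning.} I condition on $\cal{F}_1=\sigma(Y,{\cal N}_{\ld})$. The processes $({\cal N}_{\lr}^{u})$ being mutually independent and independent of $\cal{F}_1$, given $\cal{F}_1$ the variable $P$ is a sum of independent Poisson counts indexed by the atoms $(u,v)$ of ${\cal N}_{\ld}$ with $Y(u)=1$, each with parameter
\[
f(u,v)=\lr\int_{u}^{(u+v)\wedge 0}\P(\sr\geq -x)\,\diff x.
\]
Hence $P\mid\cal{F}_1$ is Poisson with mean and variance both equal to $S:=\int f(u,v)\ind{Y(u)=1}\,{\cal N}_{\ld}(\diff u,\diff v)$, and the law of total variance gives $\var(P)=\E(S)+\var(S)=\E(P)+\var(S)$.

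\textbf{Second conditioning.} I now condition on $Y$; as ${\cal N}_{\ld}$ is independent of $Y$, Campbell's identities (Proposition~\ref{remind}) yield
\[
\E(S\mid Y)=\ld\int f(u,v)\ind{Y(u)=1}\,\diff u\,F_2(\diff v),\qquad \var(S\mid Y)=\ld\int f(u,v)^2\ind{Y(u)=1}\,\diff u\,F_2(\diff v).
\]
Taking expectation of the second identity, using $\P(Y(u)=1)=\delta_+$, then applying the change of variables $s=-u$, $t=v$ and rewriting $\lr\P(\sr\geq -x)=\rho_3\overline{F}_3(-x)$ recovers the middle term of~\eqref{varPeq}.

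\textbf{Gene covariance term.} Setting $h(u)=\ld\int f(u,v)F_2(\diff v)$, so that $\E(S\mid Y)=\int h(u)\ind{Y(u)=1}\,\diff u$, a Fubini interchange gives the convolution form $h(-s)=\ld\lr\int_0^{s}\P(\sd\geq s-w)\P(\sr\geq w)\,\diff w$. Combined with the covariance identity deduced from~\eqref{covY}, $\mathrm{Cov}(\ind{Y(u_1)=1},\ind{Y(u_2)=1})=\delta_+(1-\delta_+)e^{-\Lambda|u_1-u_2|}$, this produces
\[
\var(\E(S\mid Y))=\delta_+(1-\delta_+)\iint h(-s_1)h(-s_2)e^{-\Lambda|s_1-s_2|}\,\diff s_1\,\diff s_2.
\]
The change of variables $(s_i,w_i)\mapsto(u_i,v_i)=(s_i-w_i,w_i)$ for $i=1,2$ then converts $|s_1-s_2|$ into $|(u_1-u_2)+(v_1-v_2)|$ and $\ld^2\lr^2\prod_i\P(\sd\geq u_i)\P(\sr\geq v_i)$ into $\rho_2^2\rho_3^2\prod_i\overline{F}_2(u_i)\overline{F}_3(v_i)$, which is exactly the quadruple integral in~\eqref{varPeq}. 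The only nontrivial bookkeeping is concentrated in this last step, namely recasting the 1D convolutions inside the $h(-s_i)$ as 2D integrals and performing the joint change of variables that produces the symmetric kernel $|(u_1-u_2)+(v_1-v_2)|$; everything else reduces to routine applications of the law of total variance and the Poisson moment formulas from the appendix.
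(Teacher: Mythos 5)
Your proof is correct, and it reaches \eqref{varPeq} by a genuinely different organization than the paper's. The paper conditions on $(Y(t))$ and on the atoms $(s_n,t_n)$ of ${\cal N}_{\ld}$, computes the conditional generating function $\E(z^P\mid (Y(t)))$ by two successive applications of the Laplace functional of Proposition~\ref{remind}, and then extracts the variance by differentiating the resulting double exponential twice at $z=1$ and integrating against the law of the telegraph process (the last steps being only sketched there). You instead apply the law of total variance twice, first with respect to $\sigma(Y,{\cal N}_{\ld})$ and then with respect to $Y$, so that each of the three terms of \eqref{varPeq} is attached to one source of randomness: the Poisson noise of translation yields $\E(P)$, the Poisson noise of transcription yields the middle term, and the covariance \eqref{covY} of the gene state yields the quadruple integral. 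Your identification of $f(u,v)$, of $h(-s)$ as the convolution $\ld\lr\int_0^s\P(\sd\geq s-w)\P(\sr\geq w)\,\diff w$, and the change of variables $(s_i,w_i)\mapsto(u_i,v_i)$ producing the kernel $e^{-\Lambda|(u_1-u_2)+(v_1-v_2)|}$ all check out and coincide with what the paper's second derivative would produce after expanding the square. Two minor points: the conditional variance identity $\var(S\mid Y)=\ld\int f^2\,\ind{Y(u)=1}\diff u\,F_2(\diff v)$ is not literally part of Proposition~\ref{remind} (which states only the Laplace transform and the mean \eqref{meanpp}), but it follows by differentiating \eqref{Lap} twice exactly as in the proof of Proposition~\ref{mRNAprop}, so you should say so explicitly; and the argument of $\overline{F}_3$ in your middle term is $-x$ with $x\le 0$, which matches the intended (positive-argument) reading of the statement. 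What your route buys is transparency --- the three-term structure of the variance is visible from the start --- at the cost of invoking the conditional Poisson property of $P$ given $\sigma(Y,{\cal N}_{\ld})$ directly rather than deriving it from the Laplace functional as the paper does.
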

\begin{proof}
Recall that ${\cal N}_{\ld}$ can also be represented as ${\cal N}_{\ld}=(s_n,t_n)$ and
\[
P= \sum_{n\in\Z}  \int_{\R\times\R_+}  \ind{Y(s_n)=1}\ind{\substack{x\leq 0 \leq x+y,\\ s_n\leq x\leq s_n+t_n} } {\cal N}_{\lr}^{s_n}(\diff x,\diff y).
\]
Denote by $\widehat{\E}$ the conditional expectation $\E(\cdot \mid (Y(t)),\, (s_n,t_n))$. 
The conditional generating function $\widehat{\E}\left(z^P\right)$ can be written as
\begin{multline*}
\widehat{\E}\left(\prod_{n\in \Z} \exp\left(-\log(z)\int_{\R\times\R_+}  \ind{Y(s_n)=1}\ind{\substack{x\leq 0 \leq x+y,\\ s_n\leq x\leq s_n+t_n} } {\cal N}_{\lr}^{s_n}(\diff x,\diff y)\right)\right)
\\=\prod_{n\in \Z}\widehat{\E}\left( \exp\left(-\log(z)\int_{\R\times\R_+}  \ind{Y(s_n)=1}\ind{\substack{x\leq 0 \leq x+y,\\ s_n\leq x\leq s_n+t_n }} {\cal N}_{\lr}^{s_n}(\diff x,\diff y)\right)\right),
\end{multline*}
since the point processes ${\cal N}_{\lr}^{s_n}$, $n\in\Z$, are independent.

The $n$th term of this product is, applying Proposition~\ref{remind} to the marked Poisson point processes ${\cal N}_{\lr}^{s_n}$,
\begin{multline*}
\exp\left(-\lr(1-z)\ind{Y(s_n)=1}\int \ind{\substack{x\leq 0 \leq x+y,\\ s_n\leq x\leq s_n+t_n} } \,\diff x\,F_3(\diff y)\right)
\end{multline*}
By integrating $\widehat{\E}\left(z^P\right)$ with respect to ${\cal N}_{\ld}$, the generating function  can thus be written as 
\[
\E\left(z^P|(Y(t))\right)=\E\left(\exp\left(-\int g(u,v){\cal N}_{\ld}(\diff u,\diff v)\right)\right),
\]
where
\[
g(u,v)=\lr(1-z)\ind{Y(u)=1}\int \ind{\substack{x\leq 0 \leq x+y,\\ u\leq x\leq u+v} } \,\diff x\,F_3(\diff y).
\]
Applying again  Proposition~\ref{remind} to the marked Poisson point process ${\cal N}_{\ld}$, we get 
\begin{eqnarray*}
    \lefteqn{\E\left(z^P|(Y(t))\right) =} \\
&=& \exp\left(-\ld\int_{\R} \diff u\int_{\R} F_2(\diff v)\left(1{-}\exp\left(\rule{0mm}{4mm}{-}\lr(1-z)\int \ind{\substack{x\leq 0 \leq x+y,\\ u\leq x\leq u+v\\Y(u)=1} } \,\diff x\,F_3(\diff y)\right)\right)\right).
\end{eqnarray*}
In order to obtain an expression for $\E\left(P(P-1)|(Y(t))\right)$, we have to differentiate twice the previous formula with respect to $z$ and evaluate it at $z=1$.
The resulting formula should then be integrated with respect to $(Y(t))$ and we can get formula \eqref{varPeq}, by using similar arguments as in the proof of Proposition~\ref{mRNAprop} (with more technical calculations). 
\end{proof}

\bigskip
\noindent
{\bf Applications.}\\
To show the effectiveness of the analytic formula~\eqref{varPeq} of the protein variance, one considers the cases of exponential and deterministic distributions. More realistic cases are considered, see the figure. This specific analysis will give an indication of the impact of the distribution on the protein variance. In each case the average lifetime of an mRNA [resp. protein] is $1/\mu_2$ [resp. $1/\mu_3$]. Recall that $\delta_+=\lambda_1^+/\Lambda$ and $\Lambda=\lambda_1^++\lambda_1^-$. As in the case of mRNAs above, if from a biological point of view these assumptions are not completely realistic, this analysis shows the impact of the distribution on the variance, and therefore of the necessity of having closed form expressions for a large set of distributions. 

\bigskip
\noindent
{\bf Exponential Distribution.}\\
If the distribution of the lifetime of an mRNA [resp. protein] is exponential  with parameter $\mu_2$ [resp. $\mu_3$],  then formula \eqref{varPeq} gives the classical result on the variance, see Paulsson~\cite{Paulsson}, 
\begin{equation}
\var_E(P)= \E(P)
\left(1+\frac{\lambda_3}{\mu_2+\mu_3}+\frac{\lambda_2\lambda_3(1-\delta_+)(\Lambda+\mu_2+\mu_3)}{(\mu_2+\mu_3)(\Lambda+\mu_2)(\Lambda+\mu_3)}\right).
\end{equation}

\bigskip
\noindent
{\bf Deterministic Case.}\\
If the lifetime of an mRNA is exponentially distributed with parameter $\mu_2$ and the protein lifetime is deterministic, equal to $1/\mu_3$, then formula \eqref{varPeq} gives the identity
\begin{multline}
\var_D(P)= \E(P)
\left[
\rule{0mm}{6mm}
1+2\frac{\lambda_3}{\mu_2}\left(1-\frac{\mu_3}{\mu_2}\left(1-e^{-\mu_2/\mu_3}\right)\right)
\right.\\
+\frac{2\lambda_2\lambda_3(1-\delta_+)\mu_2}{\Lambda^2-\mu_2^2}
\left(
\frac{\mu_3}{\Lambda^2}\left[1-e^{-\Lambda/\mu_3}\right]\right.\\\left.\left. -\frac{\mu_3}{\mu_2^3} \left[1-e^{-\mu_2/\mu_3}\right]+\left[\frac{1}{\mu_2^2}- \frac{1}{\Lambda^2}\rule{0mm}{6mm}\right]\right). 
\right]
\end{multline}

As it can be seen Relation~\eqref{varPeq} gives an explicit, but intricate expression for the variance, we will present some numerical experiments based on this formula. The figures~\ref{fig1}, \ref{fig2} and~\ref{fig3} consider the case when the average number of proteins at equilibrium is fixed and equal to $300$, that $\lambda_2=0.02$, $\lambda_1^-=0.01$ and that the average of the lifetime of an mRNA [resp. protein] is $172$ [resp. $1000$]. We have considered several possible choices for the distribution $F_3$, it is assumed that all the other distributions  are exponential.  The parameter $S$ of the Gaussian is its variance.

\begin{figure}[ht]
\begin{center}
\scalebox{0.8}{\includegraphics{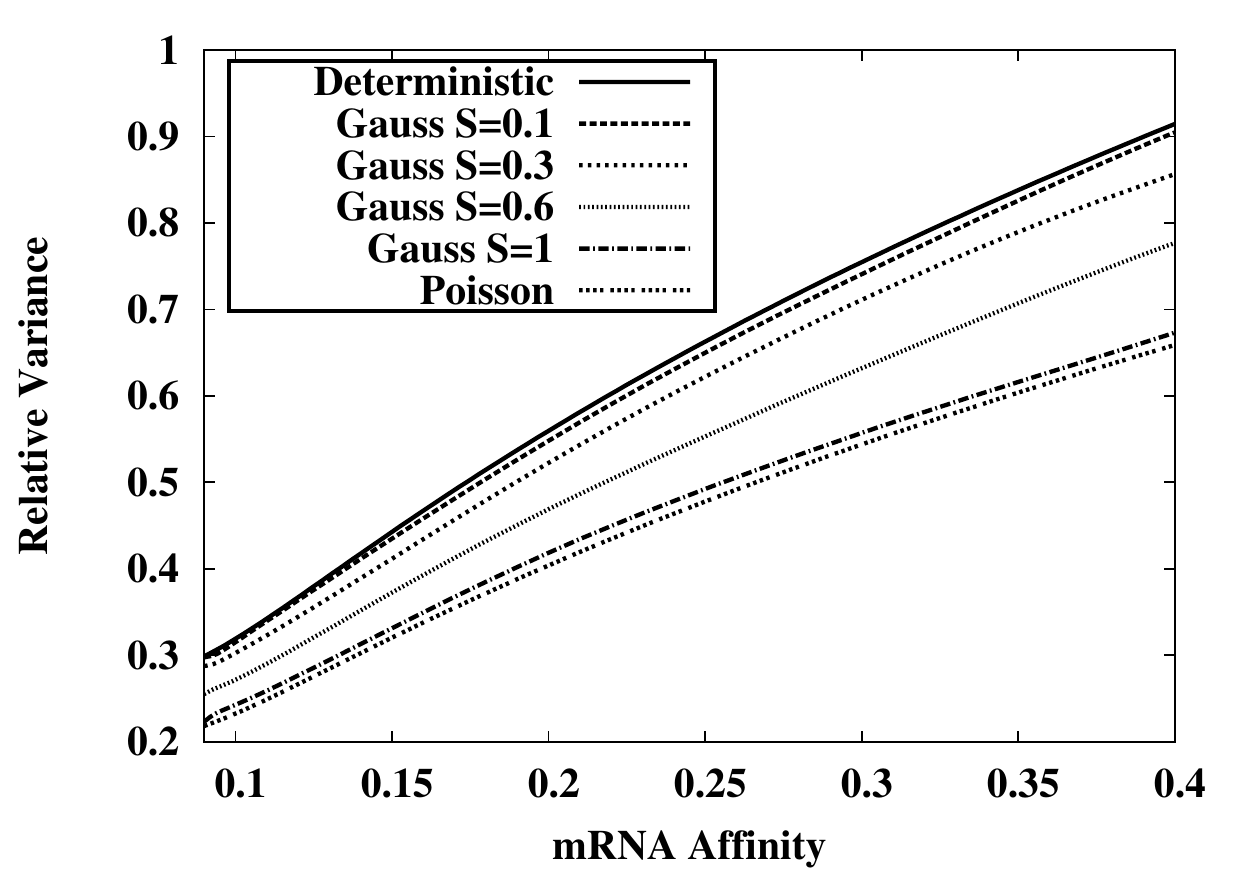}}
\caption{Square Root of Relative Variance of Nb of Proteins  with a fixed mean}\label{fig1}
\end{center}
\end{figure}

\begin{figure}[hbtp]
\begin{center}
\scalebox{0.8}{\includegraphics{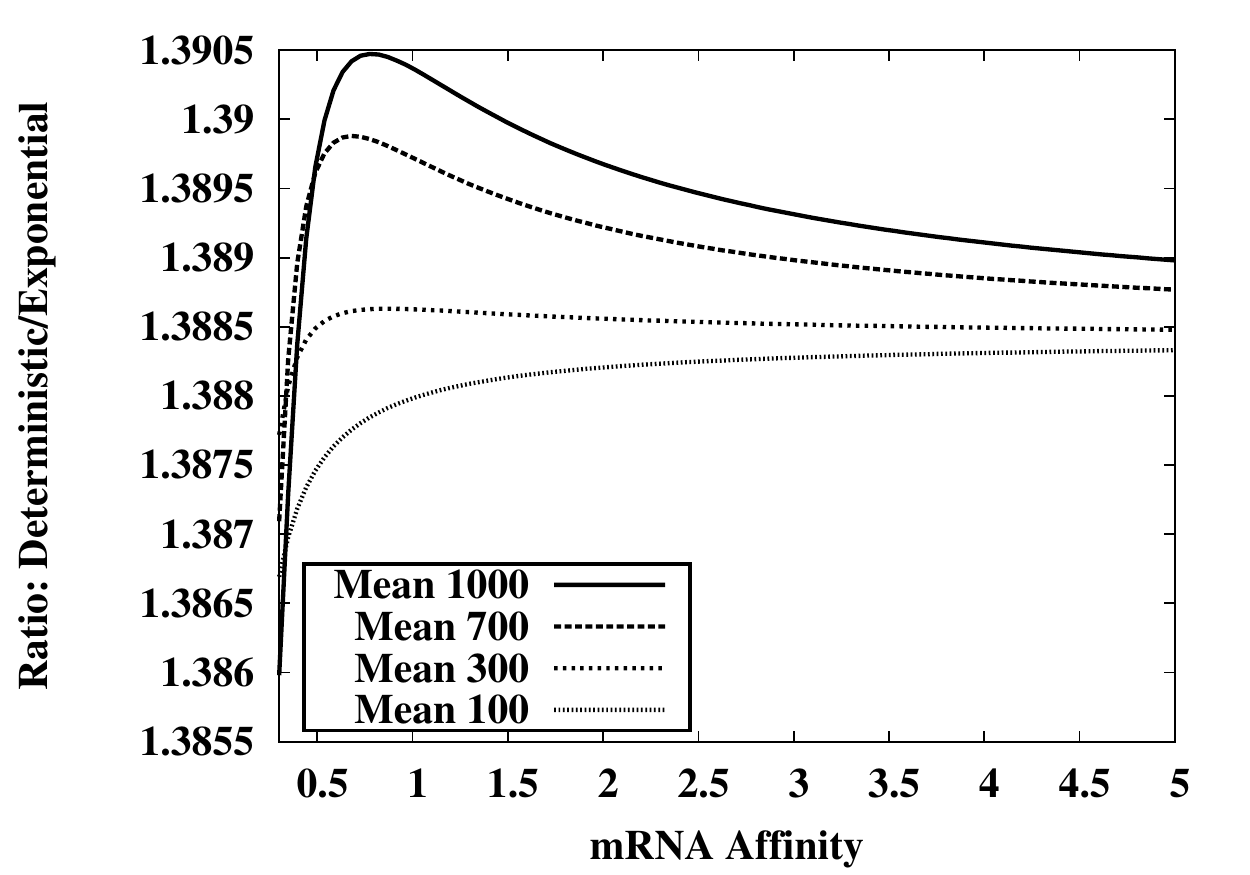}}
\caption{Square Root of Ratio of Variances of Nb of Proteins with a fixed mean}\label{fig2}
\end{center}
\end{figure}

\begin{figure}[hbtp]
\begin{center}
\scalebox{0.8}{\includegraphics{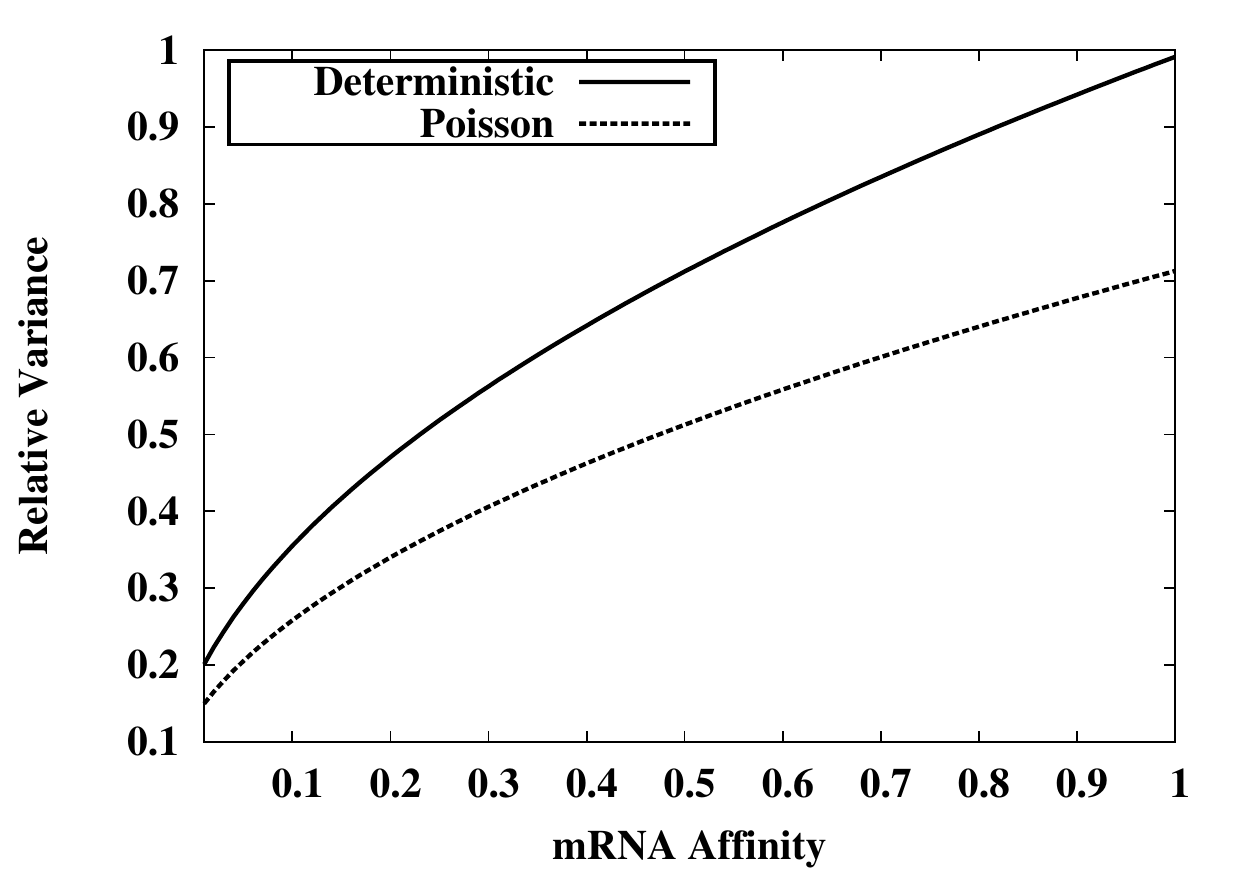}}
\end{center}
\caption{Square Root of Relative Variance of Nb of Proteins with a fixed mean}\label{fig3}
\end{figure}

\appendix

\section{A Reminder on Marked Poisson Processes}\label{secapp}
The main results concerning Poisson processes seen as marked point processes are briefly recalled. See Kingman~\cite{Kingman} and Chapter~1 of Robert~\cite{Robert} for a more detailed account. Throughout this section  $H$ is the space $\R^d$ for some $d\geq 1$. 
\begin{definition} 
 If $\lambda>0$, $\mu$ is a probability distribution on $H$, a marked Poisson process on $\R_+\times H$ with intensity $\lambda \diff x\otimes\mu$  is a sequence ${\cal N}_\lambda=(t_n,X_n)$  of elements of $\R_+\times H$ where
\begin{itemize}
\item $(t_n)$ is a (classical) Poisson process on $\R_+$ with rate $\lambda$. 
\item $(X_n)$ is an i.i.d. sequence with values in $H$ and whose distribution is $H$. 
\end{itemize}
\end{definition}
The sequence ${\cal N}_\lambda$ can also be seen as a marked point process on $\R_+\times H$, i.e. if $f:\R_+\times H\to\R_+$ is a continuous function then 
\[
{\cal N}_\lambda(f)=\int_{\R_+\times H} f(u,x){\cal N}_{\lambda}(\diff u,\diff x)=\sum_{n\geq 1} f(t_n,X_n). 
\]
In other words ${\cal N}_\lambda$ can also be seen as a sum of Dirac masses at the points $(t_n,X_n)$.  The following important proposition characterizes marked Poisson point processes. 
\begin{proposition}\label{remind}
The point process ${\cal N}_\lambda=(t_n,X_n)$ is a marked Poisson point process with intensity $\lambda \diff x\otimes\mu$ if and only if the relation
\begin{equation}\label{Lap}
\E\left(\exp\left(-{\cal N}_\lambda(f)\right)\right)=\exp\left(-\lambda\int_0^{+\infty} \left(1-e^{-f(u,x)}\right)\,\diff u\,\mu(\diff x)\right)
\end{equation}
holds for any non-negative continuous function $f$ on $\R_+\times H$.
\end{proposition}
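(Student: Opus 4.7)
The plan is to prove both implications of the equivalence, relying on the fact that the distribution of a point process on $\R_+ \times H$ is determined by its Laplace functional evaluated on nonnegative continuous test functions.

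For the forward direction, I would assume $(t_n, X_n)$ is a marked Poisson process and first verify \eqref{Lap} for simple step functions of the form $f = \sum_{i=1}^k s_i \mathbbm{1}_{A_i \times B_i}$ with the rectangles $(A_i \times B_i)$ disjoint and of finite Lebesgue$\otimes\mu$ measure. For such $f$, the random variable ${\cal N}_\lambda(f)$ decomposes as $\sum_i s_i N_i$, where $N_i = {\cal N}_\lambda(A_i \times B_i)$ counts the marked points falling in $A_i \times B_i$. Each $N_i$ is a coloring of the classical Poisson process $(t_n)$ by the Bernoulli event $\{X_n \in B_i\}$, and since $(X_n)$ is i.i.d. and independent of $(t_n)$, the classical Poisson coloring theorem gives $N_i \sim \mathrm{Poi}(\lambda |A_i| \mu(B_i))$ with independence across disjoint rectangles. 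Multiplying the one-dimensional Poisson Laplace transforms $\E(e^{-sN}) = \exp(-\beta(1 - e^{-s}))$ reproduces exactly the right-hand side of \eqref{Lap} for this $f$. I would then extend the identity to arbitrary nonnegative continuous $f$ by monotone approximation from below by simple functions, the monotone convergence theorem making the passage to the limit rigorous on both sides.

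For the converse direction, suppose \eqref{Lap} holds. I would apply the identity to $f = \sum_{i=1}^k s_i \mathbbm{1}_{A_i \times B_i}$ with disjoint rectangles (extending \eqref{Lap} from continuous to measurable step test functions by sandwiching indicators between continuous functions and invoking monotone convergence). The left-hand side is then the joint Laplace transform of the count vector $(N_1, \dots, N_k)$, while the right-hand side factorises as
\[
\prod_i \exp\bigl(-\lambda |A_i| \mu(B_i)(1 - e^{-s_i})\bigr),
\]
the joint Laplace transform of independent Poisson variables with means $\lambda |A_i| \mu(B_i)$. Uniqueness of Laplace transforms on $\N^k$ forces the vector $(N_1,\dots,N_k)$ to match in distribution. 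This establishes that ${\cal N}_\lambda$ is a Poisson point process on $\R_+ \times H$ with intensity $\lambda\,\diff u \otimes \mu$. Specialising to $B_i = H$ recovers that the projection $(t_n)$ is a classical Poisson process of rate $\lambda$, and exploiting the product form $\lambda\,\diff u \otimes \mu$ of the intensity (by conditioning on a fixed number of points in $[0,T]\times H$ and reading off the conditional law of the marks) recovers that $(X_n)$ is i.i.d. with distribution $\mu$, independent of $(t_n)$.

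The main obstacle is the final step of the converse: unpacking the random-measure characterisation into the explicit sequential representation $(t_n, X_n)$ with $(t_n)$ Poisson and $(X_n)$ i.i.d. with law $\mu$, independent of the times. This is essentially the content of the Poisson marking theorem, and it is the product structure of the intensity that enables it: conditionally on $\{{\cal N}_\lambda([0,T]\times H) = k\}$, the $k$ points are uniformly scattered in $[0,T]\times H$ with joint law proportional to $(\diff u\otimes\mu)^{\otimes k}$, whose product form gives both i.i.d.\ marks and their independence of the times. A minor secondary point is ensuring that continuity of $f$ in \eqref{Lap} is enough; this is harmless because bounded measurable step functions can be squeezed between nonnegative continuous ones, so the characterisation extends by monotone convergence.
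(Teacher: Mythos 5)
Your argument is correct, but note that the paper itself gives no proof of this proposition: it is stated in Appendix~\ref{secapp} as a recalled classical fact, with the reader referred to Kingman~\cite{Kingman} and Chapter~1 of Robert~\cite{Robert}. So there is no in-paper argument to compare against, and your write-up should be judged against those references; it holds up, though both halves can be streamlined. For the forward direction, the route usually taken in the cited texts is to condition on the ground process $(t_n)$ and use the i.i.d.\ marks to get
\[
\E\bigl(e^{-{\cal N}_\lambda(f)}\bigm|(t_n)\bigr)=\prod_{n}\int_H e^{-f(t_n,x)}\,\mu(\diff x)=\prod_n g(t_n),
\]
after which the generating-functional identity $\E\bigl(\prod_n g(t_n)\bigr)=\exp\bigl(-\lambda\int_0^{+\infty}(1-g(u))\,\diff u\bigr)$ for the unmarked Poisson process $(t_n)$ yields \eqref{Lap} in one line, avoiding the simple-function approximation; your coloring-theorem computation over disjoint rectangles is equally valid (just be sure to refine overlapping time-bases $A_i$ to a common grid before invoking independence). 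For the converse, once the forward direction is in hand the shortest path is: the Laplace functional on non-negative continuous $f$ determines the law of a point process, and a genuine marked Poisson process satisfies \eqref{Lap}, so any ${\cal N}_\lambda$ satisfying \eqref{Lap} is equal in law to one; your explicit reconstruction of the sequential representation via conditional uniformity of the points is a correct but heavier way of making that last step. The one place where care is genuinely needed is the point you flag yourself: passing from continuous test functions to indicators of rectangles requires approximating open sets from below and closing up with a monotone class argument, and your sketch of that step is adequate.
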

The left-hand-side of Equation~\eqref{Lap} is usually defined as the Laplace transform of ${\cal N}_\lambda$ at $f$. This quantity determines completely the distribution of any marked point process. 

For $\xi>0$, by replacing $f$ by $\xi f$ in Relation~\eqref{Lap}, one gets an expression for 
\[
\E\left[\exp\left(-\xi{\cal N}_\lambda(f)\right)\right],
\]
if one differentiates it with respect to $\xi$ and sets $\xi=0$, the above identity gives
\begin{equation}\label{meanpp}
\E\left({\cal N}_\lambda(f)\right)=\E\left(\int_{\R_+\times H} f(u,x){\cal N}_{\lambda}(\diff u,\diff x)\right)
=\lambda \int_{\R_+\times H} f(u,x) \,\diff u\,\mu(\diff x).
\end{equation}

\providecommand{\bysame}{\leavevmode\hbox to3em{\hrulefill}\thinspace}
\providecommand{\MR}{\relax\ifhmode\unskip\space\fi MR }
\providecommand{\MRhref}[2]{%
  \href{http://www.ams.org/mathscinet-getitem?mr=#1}{#2}
}
\providecommand{\href}[2]{#2}


\begin{thebibliography}{10}

\bibitem{Berg1978}
O.~G. Berg, \emph{{A model for the statistical fluctuations of protein numbers
  in a microbial population.}}, Journal of theoretical biology \textbf{71}
  (1978), no.~4, 587--603.

\bibitem{Fromion}
Vincent Fromion, Emanuele Leoncini, and Philippe Robert, \emph{Analysis of the
  stochastic variability of protein production}, Preprint, 2012.

\bibitem{Kingman}
J.~F.~C. Kingman, \emph{Poisson processes}, Oxford studies in probability,
  1993.

\bibitem{Norris}
J.~R. Norris, \emph{{M}arkov chains}, Cambridge University Press, Cambridge,
  1998, Reprint of 1997 original.

\bibitem{Paulsson}
J.~Paulsson, \emph{{Models of stochastic gene expression}}, Physics of Life
  Reviews \textbf{2} (2005), no.~2, 157--175.

\bibitem{Peccoud}
B.~Peccoud, J. \and~Ycart, \emph{Markovian modeling of gene-product synthesis},
  Theoretical Population Biology \textbf{48} (1995), no.~2, 222--234.

\bibitem{Rigney1979}
D.~Rigney, \emph{{Stochastic model of constitutive protein levels in growing
  and dividing bacterial cells}}, Journal of Theoretical Biology \textbf{76}
  (1979), no.~4, 453--480.

\bibitem{Rigney1977}
D.~Rigney and W.~Schieve, \emph{{Stochastic model of linear, continuous protein
  synthesis in bacterial populations}}, Journal of Theoretical Biology
  \textbf{69} (1977), no.~4, 761--766.

\bibitem{Robert}
Ph. Robert, \emph{Stochastic networks and queues}, Stochastic Modelling and
  Applied Probability Series, vol.~52, Springer, New-York, June 2003.

\bibitem{Swain2008}
Vahid Shahrezaei and Swain, \emph{{Analytical distributions for stochastic gene
  expression}}, Proceedings of the National Academy of Sciences (2008).

\bibitem{Swain2002}
P.~S. Swain, M.~B. Elowitz, and E.~D. Siggia, \emph{{Intrinsic and extrinsic
  contributions to stochasticity in gene expression.}}, Proc Natl Acad Sci U S
  A \textbf{99} (2002), no.~20.

\bibitem{Taniguchi}
Y.~Taniguchi, P.J. Choi, G.W. Li, H.~Chen, M.~Babu, J.~Hearn, A.~Emili, and
  X.S. Xie, \emph{Quantifying {E}. coli proteome and transcriptome with
  single-molecule sensitivity in single cells}, Science \textbf{329} (2010),
  no.~5991, 533--538.

\end{thebibliography}
\end{document}